\newcommand{\limply}{\rightarrow}
\newcommand*{\TakeFourierOrnament}[1]{{%
\fontencoding{U}\fontfamily{futs}\selectfont\char#1}}
\newcommand*{\danger}{\TakeFourierOrnament{66}}
\newtheorem*{dangerous}{\Large\danger}
\newcommand{\rref}[2][]{\prettyref{#2}}
\newcommand{\dL}{\text{\upshape\textsf{d{\kern-0.05em}L}}\xspace}
\newcommand{\KeYmaeraX}{KeYmaera~X\xspace}
\tikzset{arrowfill/.style={top color=red!10, bottom color=red, general shadow={fill=black, shadow yshift=-0.8ex, path fading=arrowfading}}}
\tikzset{arrowstyle/.style={draw=gray,arrowfill, single arrow,minimum height=#1, single arrow,
single arrow head extend=.4cm,}}
\newcommand{\polynomials}[2]{\ensuremath{#1[#2]}}
\definecolor{reggreen}{rgb}{.6328125,.92578125,.69921875}
\definecolor{regred}{rgb}{.98046875,.421875,.46484375}
\newcommand{\Init}{\mathit{Init}}
\newcommand{\Safe}{\mathit{Safe}}
\newcommand{\Unsafe}{\mathit{Unsafe}}
\newcommand*{\D}[2][]{#2'}
\definecolor{lsblue}{HTML}{16303A}
\definecolor{lslightblue}{HTML}{2E6579}
\definecolor{lsverylightblue}{HTML}{4699B9}
\definecolor{lsgreen}{HTML}{5ECEF9}
\definecolor{lslightgreen}{HTML}{54B9DF}
\definecolor{lsrealgreen}{HTML}{2FCC49}
\definecolor{lsred}{HTML}{CC2F4F}
\definecolor{lsochre}{HTML}{B26817}
\definecolor{lsyellow}{HTML}{FFD91C}
\tikzset{
  basic/.style  = {draw, text width=3cm, drop shadow, font=\sffamily, rectangle},
  root/.style   = {basic, rounded corners=2pt, thin, align=center,
                   fill=lsgreen!30},
  level 2/.style = {basic, rounded corners=6pt, thin,align=center, fill=lsrealgreen!60,
                   text width=10em},
  level 3/.style = {basic, thin, align=left, fill=lsred!30, text width=9em}
}
\pgfplotsset{
    mark repeat*/.style={
        scatter,
        scatter src=x,
        scatter/@pre marker code/.code={
            \pgfmathtruncatemacro\usemark{
                or(mod(\coordindex,#1)==0, (\coordindex==(\numcoords-1))
            }
            \ifnum\usemark=0
                \pgfplotsset{mark=none}
            \fi
        },
        scatter/@post marker code/.code={}
    },
    colormap/PuOr,
    cycle list/PuOr,
    cycle multiindex* list={
			[indices of colormap={1,2,3,9,8,7} of PuOr]\nextlist
			solid,thick\\solid,thick\\solid,thick\\solid,thick\\solid,thick\\dashed,thick\\dashed,thick\\dashed,thick\\dashed,thick\\\nextlist
			mark=*,mark=square*,mark=triangle*,mark=diamond*,
			mark=x,
			mark=*,mark=square*,mark=triangle*,mark=diamond*,
			mark=x\nextlist
		},
		mark repeat*=5
}
\begin{document}

\title{Pegasus: Sound Continuous Invariant Generation\thanks{%
This material is based upon work supported by the National Science Foundation under Award CNS-1739629 and under Graduate Research Fellowship Grants Nos. DGE1252522 and DGE1745016, by AFOSR under grant number FA9550-16-1-0288, by the United States Air Force and DARPA under Contract No. FA8750-18-C-0092, and by the Alexander von Humboldt Foundation. The third author was supported by A$^*$STAR, Singapore.
Any opinions, findings, and conclusions or recommendations expressed in this material are those of the authors and do not necessarily reflect the views of any sponsoring institution, the U.S. government or any other entity.
}%
}

\author{Andrew~Sogokon        \and
        Stefan~Mitsch     \and
        Yong~Kiam~Tan     \and
        Katherine~Cordwell \and
        Andr\'{e}~Platzer
}

\institute{  A. Sogokon$^*$, S. Mitsch, Y.K. Tan, K. Cordwell and A. Platzer \at
             Computer Science Department\\
             Carnegie Mellon University\\
             5000 Forbes Avenue\\
             Pittsburgh, PA 15213, USA\\
             \email{\texttt{\{asogokon|smitsch|yongkiat|kcordwel|aplatzer\}@cs.cmu.edu}}\\
	     \emph{$^*$}Now at ECS, University of Southampton, UK.  %
}

\date{Received: date / Accepted: date}

\maketitle

\begin{abstract}
\emph{Continuous invariants} are an important component in deductive verification of hybrid and continuous systems. Just like discrete invariants are used to reason about correctness in discrete systems without having to unroll their loops, continuous invariants are used to reason about differential equations without having to solve them.
\emph{Automatic generation} of continuous invariants remains one of the biggest practical challenges to the automation of formal proofs of safety for hybrid systems.
There are at present many disparate methods available for generating continuous invariants; however, this wealth of diverse techniques presents a number of challenges, with different methods having different strengths and weaknesses.
To address some of these challenges, we develop \emph{Pegasus}: an automatic continuous invariant generator which allows for combinations of various methods, and integrate it with the \KeYmaeraX theorem prover for hybrid systems.
We describe some of the architectural aspects of this integration, comment on its methods and challenges, and present an experimental evaluation on a suite of benchmarks.
\keywords{invariant generation, continuous invariants, ordinary differential equations, theorem proving.}
\end{abstract}
\section{Introduction}
\label{sec:introduction}

Safety verification problems for ordinary differential equations (ODEs) are continuous analogs to Hoare triples: the objective is to show that an ODE cannot evolve out of a designated set of safe states from any of its designated initial states.
The role of continuous invariants is broadly analogous to that of inductive invariants for discrete program verification.
A continuous invariant is a set of states that can never be left when following the ODE from that set; such an invariant implies safety when it contains all of the initial states and is also a subset of the safe states.
The problem of automatically generating invariants (also known as \emph{invariant synthesis})
is one of the greatest practical challenges in deductive verification of both continuous and discrete
systems. In theory, it is actually the \emph{only} challenge for hybrid systems safety~\cite{DBLP:conf/lics/Platzer12b}.

The proliferation of published techniques~\cite{DBLP:conf/cdc/SassiGS14,DBLP:conf/hybrid/KongBSJH17,DBLP:conf/emsoft/LiuZZ11,DBLP:journals/fmsd/PlatzerC09,DBLP:journals/tcs/RebihaMM15,DBLP:conf/hybrid/Rodriguez-CarbonellT05,DBLP:conf/vmcai/SogokonGJP16,DBLP:conf/hybrid/Tiwari08,DBLP:conf/hybrid/TiwariK04} for continuous invariant generation---targeting various classes of systems, and having different strengths and weaknesses---presents a complication: ideally, one does not want to be restricted by the limitations of one particular generation technique (or small family of techniques). Instead, it is far more desirable to have a framework that accommodates existing generation methods, allows for their combination, and is extensible with new methods as they become available.
In this article we (partially) meet the above challenge by
developing a single framework which allows us to combine invariant generation methods into novel invariant generation \emph{strategies}. In our work, we are guided by the following considerations:
\begin{enumerate}
\item Specialized invariant generation methods are effective only when the problem falls within their domain; their use must therefore be targeted.
\item A combination of invariant generation methods can be more practical than
any of the methods considered in isolation. A flexible and reconfigurable mechanism for combining
these methods is thus highly desirable.
\item Reasoning with automatically generated invariants needs to be done in a \emph{sound} fashion: any deficiencies in the generation procedure must not compromise the final verification result.
\end{enumerate}
Our interest in automatic invariant generation is motivated by the pressing need
to enhance the level of proof automation in deductive verification tools for hybrid systems. In this work we target the \mbox{\KeYmaeraX} theorem prover~\cite{Fulton2015}.

\paragraph{Contributions.}
This article is an extended version of the conference
paper~\cite{DBLP:conf/fm/SogokonMTCP19}.  The article describes the design and
implementation of a continuous invariant generator (Pegasus)\footnote{%
{\bf An etymological note on naming conventions.} The
KeY~\cite{DBLP:conf/cade/BeckertGHKRSS07} prover provided the foundation for
developing KeYmaera~\cite{DBLP:conf/cade/PlatzerQ08}, an interactive theorem
prover for hybrid systems. The name KeYmaera was a pun on the
\textit{Chimaera}, a hybrid monster from Classical Greek mythology. The tactic
language of the new (aXiomatic) \KeYmaeraX prover~\cite{Fulton2015} is called
Bellerophon~\cite{Fulton2017}, after the hero who defeats the Chimaera in the
myth. In keeping with an established tradition, the invariant generation
framework is called Pegasus because the aid of this winged horse was crucial to
Bellerophon in his feat.}
 and its integration into \KeYmaeraX.
It outlines some of the principles behind this coupling, the techniques used to generate invariants,
and the mechanism used for combining them into more powerful invariant generation strategies.
An evaluation of this integration on a set of verification benchmarks is presented---with very
promising results. The present article extends our previous work~\cite{DBLP:conf/fm/SogokonMTCP19} with:

\begin{enumerate}
\item Extensive coverage of the \emph{methods} for generating continuous invariants employed by Pegasus (\rref{sec:pegasus}), including extended descriptions of several invariant generation methods, as well as new material on \emph{conic abstractions}~\cite{DBLP:conf/formats/BogomolovGHK17} and on the theory and practice of generating \emph{rational first integrals} for non-linear and linear systems~\cite{falconi2004,ferragut2010,gorbuzov2012,man1993,Man1994,Schlomiuk1993}. The extended article also includes a detailed account of the pitfalls and caveats associated with the various invariant generation and checking methods (Sections~\ref{sec:checking}--\ref{sec:evaluation}).
\item New insights on invariant generation \emph{strategies} based on combining various invariant generation methods (\rref{sec:saturation}), including various configuration options for the \emph{differential saturation}~\cite{DBLP:journals/fmsd/PlatzerC09} strategy and a new strategy based on \emph{differential divide-and-conquer}~\cite{DBLP:conf/vmcai/SogokonGJP16}.
\item An extended benchmark suite with $60$ new problems on top of the $90$ existing ones (\rref{sec:evaluation}), together with extended experimental evaluation and analysis of various invariant generation strategy configurations.
\end{enumerate}

\paragraph{Structure of this article.}
Mathematical preliminaries and definitions are reviewed in~\rref{sec:preliminaries}.
\rref{sec:checking} recalls the problem of continuous invariant \emph{checking} and describes our architecture for \emph{sound} invariant checking and generation.
Sections~\ref{sec:pegasus} and~\ref{sec:saturation} describe some of the methods employed by Pegasus for generating continuous invariants, along with mechanisms for their combination.
\rref{sec:evaluation} presents an empirical evaluation of our integration with \KeYmaeraX on a suite of verification benchmarks.
\rref{sec:related} reviews related work and \rref{sec:outlook} discusses the outlook and possible further extensions.
\rref{sec:conclusion} ends with a summary and concluding remarks.

\section{Preliminaries}
\label{sec:preliminaries}
\paragraph{Ordinary Differential Equations.}
An $n$-dimensional autonomous system of first-order ODEs has the form:
\(   \D{\vec{x}} = f(\vec{x}),  \)
where $\vec{x} = (x_1,\dots,x_n) \in \reals^n$ is a vector of state
variables, $\D{\vec{x}}=(\D{x}_1,\dots,\D{x}_n)$ denotes their
time-derivatives, i.e. $\frac{d x_i}{dt}$ for each
$i=1,\dots,n$, and $f(\vec{x}) = (f_1(\vec{x}), \dots, f_n(\vec{x}))$ specify the right-hand side (RHS) of the equations that these time-derivatives must obey along solutions to the ODEs.
Geometrically, such a system of ODEs defines a \emph{vector field}
$f:\mathbb{R}^n \to \mathbb{R}^n$, associating to each point $\vec{x} \in
\mathbb{R}^n$ the vector  $f(\vec{x}) = (f_1(\vec{x}),
\dots,f_n(\vec{x})) \in \mathbb{R}^n$ specifying in which direction the continuous system evolves at $\vec{x}$.
Whenever the state of the system is required to be confined within some prescribed set
of states $Q \subseteq \mathbb{R}^n$, called its \emph{evolution domain constraint}\footnote{%
Evolution domain constraints are also called \emph{mode invariants} in the context of hybrid automata. We avoid this name to prevent fundamental confusion with generated invariants.}, we will write $\D{\vec{x}} =
f(\vec{x})~\&~Q$. If no evolution domain constraint is specified, then $Q = \mathbb{R}^n$.
A \emph{solution} to the initial value problem for the system of ODEs
\mbox{$\D{\vec{x}} =f(\vec{x})$} with initial value $\vec{x}_0 \in
\mathbb{R}^n$ is a differentiable function  $\vec{x}(\vec{x}_0,t) :
(a,b) \to \mathbb{R}^n$ defined on some \emph{maximal interval of existence} $(a,b) \subseteq \mathbb{R}
\cup \{ \infty, -\infty \}$ where $a<0<b$, and such that \(\vec{x}(\vec{x}_0,0)=\vec{x}_0\) and
$\frac{d}{dt}\vec{x}(\vec{x}_0,t) = f(\vec{x}(\vec{x}_0,t))$ for all $t \in
(a,b)$.
The \emph{Lie derivative} of a continuously differentiable function $p:\mathbb{R}^n \to \mathbb{R}$ with respect to vector field $f$ is defined as $p' \equiv \sum_{i=1}^n \frac{\partial p}{\partial x_i} f_i$
and equals the time-derivative of $p$ evaluated along the solutions to the system $\D{\vec{x}} = f(\vec{x})$~\cite{DBLP:journals/jar/Platzer17,DBLP:journals/jacm/PlatzerT20}.

\paragraph{Semi-algebraic Sets.}
A set $S \subseteq \mathbb{R}^n$ is \emph{semi-algebraic} iff it is characterized by a finite boolean combination of polynomial equations and inequalities:
\begin{equation}
\bigvee_{i=1}^{l}\left( \bigwedge_{j=1}^{m_i}~p_{ij} < 0~\land~\bigwedge_{j=m_i+1}^{M_i}~p_{ij} = 0 \right)\enspace,
\label{eq:semialg}
\end{equation}
where $p_{ij} \in \mathbb{R}[x_1,\dots,x_n]$ (i.e. $p_{ij}$ are multivariate polynomials in the indeterminates $x_1,\dots,x_n$, with real coefficients).
By quantifier elimination, every first-order formula of real arithmetic characterizes a semi-algebraic set and can be expressed in the form~\rref{eq:semialg}, see e.g. Mishra~\cite[\S 8.6]{Mishra}.
With an abuse of notation, this article uses formulas and the sets they characterize interchangeably.

\paragraph{Continuous Invariants in Verification.}
Safety specifications for ODEs and hybrid  systems can be rigorously verified in formal logics, such as \emph{differential dynamic logic} (\dL)~\cite{DBLP:journals/jar/Platzer08,DBLP:conf/lics/Platzer12a,DBLP:journals/jar/Platzer17} as implemented in the \KeYmaeraX proof assistant~\cite{Fulton2015}
and \emph{hybrid Hoare logic}~\cite{DBLP:conf/aplas/LiuLQZZZZ10} as implemented in the \mbox{HHL prover}~\cite{DBLP:conf/icfem/WangZZ15}.
The use of appropriate continuous invariants is key to these verification approaches as they allow the complexities of the continuous dynamics to be handled rigorously even for ODEs without closed-form solutions.
For example, the \dL formula $\Init \limply [\D{\vec{x}} = f(\vec{x})~\&~Q]~\Safe$ states that the safety property $\Safe$ is satisfied throughout the continuous evolution of the system $\D{\vec{x}} = f(\vec{x})~\&~Q$ whenever the system begins its evolution from a state satisfying $\Init$.
The invariant reasoning principle for verifying such a safety property is given by the following sound rule of inference in \dL, with three premisses above the bar and the conclusion below:
\[
    (\mathrm{Safety})~\frac{\Init \limply I \quad~\quad I\limply [\D{\vec{x}} = f(\vec{x})~\&~Q]~I \quad~\quad I\limply\Safe}{\Init \limply [\D{\vec{x}} = f(\vec{x})~\&~Q]~\Safe}\enspace.
\]

In this rule, the first and third premiss respectively state that the initial set $\Init$ is contained within the set $I$, and that $I$ lies entirely inside the safe set of states $\Safe$.
The second premiss states that $I$ is a \emph{continuous invariant}, i.e. $I$ is maintained throughout the continuous evolution of the system whenever it starts inside $I$, that is, the following \dL formula is true in all states:
\begin{equation}
I\limply [\D{\vec{x}} = f(\vec{x})~\&~Q]~I\enspace.
\label{eq:invariance}
\end{equation}

Thus, the problem of verifying safety properties of ODEs reduces to finding an invariant $I$ that can be \emph{proved} to satisfy all three premisses.
Semantically, a continuous invariant can also be defined as follows.

\begin{definition}[Continuous invariant]
\label{def:continv}
Given a system $\D{\vec{x}} = f(\vec{x})~\&~Q$, the set
$I \subseteq \mathbb{R}^n$ is a continuous invariant iff the following statement holds:\footnote{To simplify notation, $\forall\,t\geq0$ is implicitly assumed to quantify over all times $t \geq 0$ in the maximal interval of existence of the ODE solution from $\vec{x}_0$, i.e., where $ \vec{x}(\vec{x}_0,t)$ is defined.}
\[
    \forall\,\vec{x}_0 \in I~\forall\,t\geq 0:~\big((\forall\, \tau\in[0,t]:~\vec{x}(\vec{x}_0,\tau) \in Q) \implies \vec{x}(\vec{x}_0,t) \in I\,\big)\enspace.
\]
\end{definition}

For any given set of initial states $\Init \subseteq \mathbb{R}^n$, a
continuous invariant $I$ such that $\Init \subseteq I$ provides a \emph{sound
over-approximation} of the states reachable by the system from $\Init$ by
following the solutions to the ODEs within the evolution domain constraint $Q$.
Indeed, the exact set of states reachable by a
continuous system from $\Init$ provides the \emph{smallest} such invariant.\footnote{Unfortunately, reachable sets rarely have a simple description as semi-algebraic sets.}
While~\rref{def:continv} above features the solution $\vec{x}(\vec{x}_0,t)$, which
may not be available explicitly, a crucial advantage afforded by continuous
invariants is the possibility of checking whether a given set is a continuous
invariant \emph{without computing the solution}, i.e. by working \emph{directly
with the ODEs}.

\section{Sound Invariant Checking and Generation}
\label{sec:checking}
The problem of \emph{checking} whether a semi-algebraic set $I\subseteq \mathbb{R}^n$ is a continuous invariant of a polynomial system of ODEs $\D{\vec{x}}=f(\vec{x})\,\&\, Q$ was shown to be \mbox{\emph{decidable}} by Liu, Zhan, and Zhao~\cite{DBLP:conf/emsoft/LiuZZ11}.
This decision procedure, henceforth referred to as LZZ, provides a way of automatically checking continuous invariants~\rref{eq:invariance} by exploiting facts about higher-order Lie derivatives of multivariate polynomials appearing in the syntactic description of $I$ and the Noetherian property of the ring $\mathbb{R}[\vec{x}]$ \cite{DBLP:journals/cl/GhorbalSP17,DBLP:conf/emsoft/LiuZZ11}; its implementation requires an algorithm for constructing Gr\"obner bases~\cite{DBLP:books/daglib/0091062}, as well as a decision procedure for the universal fragment of real arithmetic~\cite{roy1996basic}.
A logical alternative for invariant checking is provided by the complete \dL axiomatization for differential equation invariants~\cite{DBLP:journals/jacm/PlatzerT20}.
Whereas using LZZ results in a \textsf{yes/no} answer to an invariance question~\rref{eq:invariance}, \dL makes it possible to construct a \emph{formal proof of invariance} from a small set of ODE axioms~\cite{DBLP:journals/jacm/PlatzerT20} whenever the property holds (or a refutation whenever it does not).

\subsection{Invariant Generation with Template Enumeration}
Given a means to perform invariant checking with real arithmetic, an obvious solution to the invariant generation problem (which has been suggested by numerous authors \cite{DBLP:conf/emsoft/LiuZZ11,DBLP:journals/fmsd/PlatzerC09,DBLP:conf/issac/SturmT11}) involves the \emph{method of template enumeration}, which yields a theoretically complete semi-algorithm, in the sense that it terminates with a positive answer iff that is possible with the given templates. A template is a parametric formula, such as
    \[
        \textcolor{red}{a_0} + \textcolor{red}{a_1}x + \textcolor{red}{a_2}y + \textcolor{red}{a_3}x^2 + \textcolor{red}{a_4}xy + \textcolor{red}{a_5}y^2 < 0 \land \textcolor{red}{b_0} + \textcolor{red}{b_1}x + \textcolor{red}{b_2}y \geq 0\enspace,
    \]
composed from polynomials in the state variables (in this example $x,y$) with symbolic coefficients (here $\textcolor{red}{a_0}$,$\textcolor{red}{a_1}$,$\textcolor{red}{a_2}$,$\textcolor{red}{a_3}$,$\textcolor{red}{a_4}$,$\textcolor{red}{a_5}$ and $\textcolor{red}{b_0}$,$\textcolor{red}{b_1}$,$\textcolor{red}{b_2}$), which are interpreted over the reals.
All it takes \emph{in theory} is to exhaustively enumerate parametric templates matching \emph{all} real arithmetic formulas describing all semi-algebraic sets, and use a quantifier elimination algorithm (such as CAD~\cite{Collins1975}) to identify whether choices for the template parameters exist that meet the required arithmetic constraints. While templates make this British Museum Algorithm-like approach more successful than, e.g. exhaustively enumerating all proofs \cite{Herbrand30}, the method is nevertheless quite impractical for the resulting real arithmetic \cite{DBLP:conf/itp/Platzer12}. To appreciate why, let us only remark that quantifier elimination algorithms for real arithmetic used in practice have doubly-exponential time complexity in the number of variables~\cite{DBLP:conf/dimacs/Renegar90}. Template enumeration treats every monomial coefficient in the template as a fresh variable, leading to exponentially many real arithmetic variables, which makes this approach highly unscalable.
In practice, invariant generation is achieved by using incomplete---but considerably more efficient---generation methods. These methods are numerous and vary considerably in their strengths and limitations, creating a wide spectrum of possible trade-offs in performance, the quality, and the form of invariants that one can generate. Effectively navigating this spectrum is an important practical challenge that this article seeks to address.

\subsection{Soundness: Proof Assistants and Invariant Generation}
There are a number of design decisions that can be exercised in how reasoning
with continuous invariants is performed within a deductive verification
framework.
A fundamental design decision is how tightly \begin{enumerate*}[label=(\roman*)] \item continuous invariance checking and \item continuous invariant generation \end{enumerate*} are to be coupled with the implementation of the prover.
This space of design choices is exemplified by the HHL prover and the \KeYmaeraX prover.

The HHL prover~\cite{DBLP:books/sp/17/ChenHTWYZZZ17,DBLP:conf/icfem/WangZZ15} implements
\begin{enumerate*}[label=(\roman*)]
\item the LZZ decision procedure for invariant checking and
\item the method of template enumeration for invariant generation based on real quantifier elimination and Gr\"obner bases.
\end{enumerate*}
From the perspective of the HHL prover, these are \emph{trusted external oracles} for checking the validity of statements about continuous invariance; trusting the output of the HHL prover includes trusting the implementation of its LZZ procedure and the invariant generator (and any arithmetic tool either of them use).

\begin{figure}[h!]
	\begin{subfigure}[t]{0.5\textwidth}
		\centering
\begin{tikzpicture}[scale=0.85, every node/.style={scale=0.85}]
\draw[rounded corners, fill=lsred!30]  (-4.5,2.7352) rectangle (-0.5,1.7056) node (v2) {};
	\node at (-2.5,1.9704) {{\bf \textsf{LZZ procedure}}};
\node[opacity=0] at (-2.5,2.2204) {assistant};
\draw[rounded corners, fill=lsred!30]  (-4.5,5) node (v1) {} rectangle (-0.5,3.9116);
	\node at (-2.5,4.1764) {{\bf \textsf{prover core}}};

	\node at (-2.5,2.4704) {\it \textsf{soundness-critical}};
	\node at (-2.5,4.6764) {\it \textsf{soundness-critical}};
\node at (-1.1764,3.294) {{\textsf{yes/no}}};
\node at (-3.8704,3.294) {{\textsf{invariant?}}};
\draw[->, >=stealth, thick] (-3.0292,3.8764) --  (-3.0292,2.7352);
\draw[->, >=stealth, thick]  (-2.0292,2.7352) -- (-2.0292,3.8764) ;
\draw[->, >=stealth, thick] (-2.5,5.5) -- (-2.5,5);

\node at (-2.5,5.8236) {$I \to [x' = f(x) \, \& \,Q]\,I$};
\node[rotate=90] at (-5,4.47) {\bf HHL porver};
\node[rotate=0] at (-2.5,1.3824) {\it External oracle};
\end{tikzpicture}

		\caption{PVS-style \label{fig:pvs}}
	\end{subfigure}
	\begin{subfigure}[t]{0.5\textwidth}
		\centering

\begin{tikzpicture}[scale=0.85, every node/.style={scale=0.85}]

\draw[rounded corners, fill=lsred!30]  (-4.5,5) rectangle (-0.5,2.1172);
\node at (-2.5,2.4996) {{\bf \dL core}};
\node[opacity=0] at (-2.5,2.7496) {assistant};
\draw[dashed, rounded corners, fill=lsrealgreen!60]  (-4.5,5) rectangle (-0.5,3.7056);
\node at (-2.5,4.1764) {{\bf \dL tactics}};

	\node at (-2.5,2.9996) {\it \textsf{soundness-critical}};
	\node at (-2.5,4.6764) {\it \textsf{non-critical}};
\node at (-2.5,1.2936) {{\it Proof/refutation from {\bf \dL} axioms}};
\draw[->, >=stealth, thick] (-2.5,3.8764) -- (-2.5,3.3232);
\draw[->, >=stealth, thick] (-2.5,5.5) -- (-2.5,5);
\draw[->, >=stealth, thick] (-2.5,2.1172) -- (-2.5,1.6172);
\node at (-2.5,5.8236) {$I \to [x' = f(x) \, \& \,Q]\,I$};
\node[rotate=90] at (-5,3.6468) {\bf \KeYmaeraX};
\end{tikzpicture}
		\caption{LCF-style  \label{fig:lcf}}
	\end{subfigure}
	\

		\caption{Alternative prover architectures for \emph{checking} conjectured continuous invariants, i.e. formulas for the form $I\limply [\D{\vec{x}} = f(\vec{x})~\&~Q]~I $~\label{fig:arch}}
\end{figure}
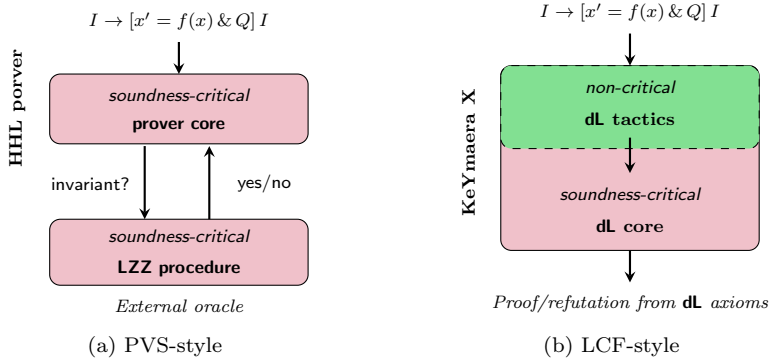

In contrast, \KeYmaeraX~\cite{Fulton2015} pursues an LCF-style approach, seeking to minimize the soundness-critical code that needs to be trusted in its output~\cite{MitschP20}. For continuous invariants, it achieves this by
\begin{enumerate*}[label=(\roman*)]
    \item checking invariance within the axiomatic framework of \dL (rather than trusting external checking procedures) and
\item accepting \emph{conjectured invariants} generated from a variety of sources but \emph{separately checking} the result.
\end{enumerate*}
Invariant checking in \KeYmaeraX is automatic \cite{DBLP:journals/jacm/PlatzerT20}, which is made possible by the use of specialized
proof \emph{tactics}~\cite{Fulton2017};
these additionally allow it to use a variety of other
(incomplete, but computationally inexpensive) methods for proving continuous invariance~\cite{DBLP:journals/cl/GhorbalSP17}.

\begin{remark}
	The difference between these two approaches (\rref{fig:arch}) is broadly analogous to the use of trusted decision procedures in PVS~\cite{DBLP:conf/fm/DenmanM14} and oracles in HOL~\cite{DBLP:conf/itp/BohmeW10,weber2011smt} on the one hand, and LCF-style proof reconstruction (e.g. in Isabelle~\cite{DBLP:journals/entcs/Weber06}) on the other.
\end{remark}
\begin{remark}
	\KeYmaeraX also supports witness checking for the universal fragment of real arithmetic \cite{DBLP:conf/cade/PlatzerQR09} resulting from ODE invariance checking \cite{DBLP:journals/jacm/PlatzerT20}.
	In theory, this leads to a complete LCF-style approach, but in practice, the performance of real arithmetic witness generation is only competitive with second-tier quantifier elimination \cite{DBLP:conf/cade/PlatzerQR09}.
\end{remark}

\subsection{Syntactic Representation of Invariants}

A subtle issue that arises when interfacing with provers like \KeYmaeraX or the HHL prover is which terms can be \emph{syntactically} represented in the prover.
The choice of representation limits the kinds of invariants that can be described (or generated), but it is an important consideration for computational efficiency and soundness purposes.
For example, Noetherian functions support a sound and complete axiomatization of invariants in \dL~\cite{DBLP:journals/jacm/PlatzerT20} but can lead to undecidable arithmetic.
Rational functions and roots could be supported~\cite{DBLP:conf/cade/BohrerFP19} but would increase the complexity of the required symbolic computations.
For decidability of the invariance and arithmetic questions, this article only considers semi-algebraic invariants, i.e., those built from polynomials as in~\rref{eq:semialg}.

A similar issue arises even when restricted to polynomial terms.
Na\"ively, for maximum flexibility, one would like to describe invariants using polynomials $p\in\polynomials{\reals}{x}$ that have arbitrary real-valued coefficients.
In practice though, only \emph{computable} subfields $K$ of $\reals$ can be effectively represented and used on a computer.
Thus, any computational tool must necessarily work with polynomials $p\in\polynomials{K}{x}$ over some choice of representation for the field of coefficients $K$.
Real algebraic numbers $K=\bar{\rationals}$ would work as coefficients, but they increase the complexity of symbolic computations due to the added need to work with polynomial ideal arithmetic for coefficients and can also lead to some subtleties with the non-differentiability of the resulting root function itself~\cite{DBLP:conf/cade/BohrerFP19}.
On the other extreme, floating point numbers are computationally efficient but they do not form a field, and would also cause numerical errors that make it harder to obtain sound and exact answers in the end.
For these reasons, \KeYmaeraX works with polynomials $p\in\polynomials{\rationals}{x}$ that have rational coefficients.\footnote{In practice, some generation methods may need to internally use floating point arithmetic when interfacing with numerical solvers, but Pegasus then applies rounding procedures to obtain polynomials with rational coefficients.}
This results in fast evaluations and symbolic computations, and a reasonable (although nontrivial) complexity for the resulting real arithmetic validity decision problem.
Many invariant generation techniques described in this article are fairly general and agnostic to the precise choice of field $K$.
Thus, the rest of this article elides this subtlety and describes the invariant generation algorithms over $p \in\polynomials{\reals}{x}$, i.e., with $\reals$ as the coefficient field.

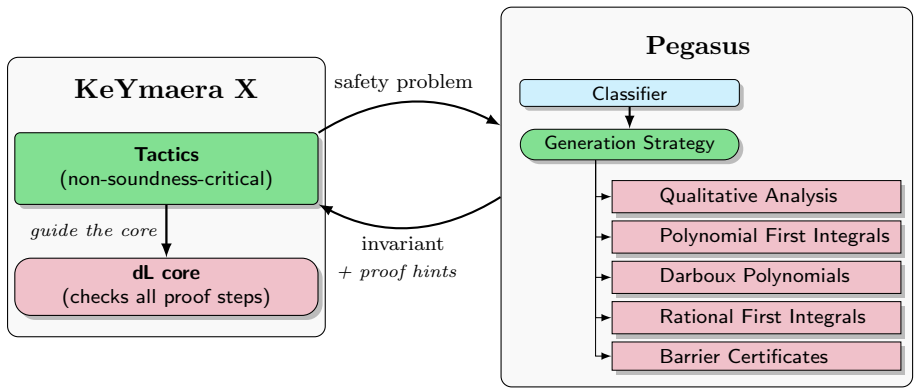
\begin{figure}[bht]
\begin{tikzpicture}[
  scale=0.95, every node/.style={scale=0.95},
  level 1/.style={sibling distance=20mm},
  edge from parent/.style={->,draw},
  >=latex]
\node[draw,fill=gray!5,rectangle,rounded corners,minimum width=4.4cm,minimum height=3.9cm] (keymaerax) {};
\node[below=.2cm of keymaerax.north] {\large\textbf{\KeYmaeraX}};
\node[draw,fill=lsrealgreen!60,rectangle,drop shadow,rounded corners=2pt,minimum width=3.5cm,minimum height=1cm,text width=4cm,font=\sffamily,align=center,below=1cm of keymaerax.north] (tactics) {\textbf{Tactics}\\ (non-soundness-critical)};
\node[draw,fill=lsred!30,rectangle,drop shadow, rounded corners=6pt,minimum width=3.5cm,minimum height=.5cm,text width=4cm,align=center,font=\sffamily,below=.7cm of tactics] (core) {\textbf{\dL core}\\ (checks all proof steps)};
    \draw[->, line width=0.3mm] (tactics) -- node[left] {\scriptsize \textit{guide the core}} (core);

\node[draw,fill=gray!5,rectangle,rounded corners,minimum width=5.7cm,minimum height=5.3cm,right=2.3cm of keymaerax] (pegasus) {};
\node[below=.2cm of pegasus.north] {
\centering

    \vspace{5cm}
\begin{tikzpicture}[
  level distance=0.5cm,
  level 1/.style={sibling distance=20mm},
  edge from parent/.style={->,thick,draw},
  >=latex]

\node at (0,0.7) {\large\textbf{Pegasus}};
\node[root, xshift=-1cm] {Classifier}
  child {node[level 2] (c2) {Generation Strategy}};

\begin{scope}[every node/.style={level 3}]
	\node [below=0.1cm of c2, xshift=50pt, yshift=-5pt, minimum width=4cm] (c21) {\mbox{Qualitative Analysis}};
	\node [below=0.1cm of c21, minimum width=4cm] (c22) {\mbox{Polynomial First Integrals}};
	\node [below=0.1cm of c22,minimum width=4cm] (c23) {\mbox{Darboux Polynomials}};
	\node [below=0.1cm of c23,minimum width=4cm] (c24) {\mbox{Rational First Integrals}};
	\node [below=0.1cm of c24,minimum width=4cm] (c25) {\mbox{Barrier Certificates}};
\end{scope}

\foreach \value in {1,...,5}
  \draw[->] (c2.205) |- (c2\value.west);

\end{tikzpicture}
};
     \node[text width=3cm] at (3.83,1.59) {\footnotesize safety problem};
     \node[text width=3cm] at (3.85,-1.05) {\scriptsize \textit{+ proof hints}};
     \node[text width=3cm] at (4.2,-0.65) {\small invariant};
    \draw [thick, -latex] ($(tactics.east)+(0,0.5)$) to [bend left] ($(pegasus.west)+(0,1)$);
    \draw [thick, -latex] ($(pegasus.west)-(0,0)$) to [bend left] ($(tactics.east)-(0,0.5)$);
\end{tikzpicture}
\caption{Sound invariant generation: invariant generator analyzes safety problem to provide invariants and proof hints to tactics; the invariants are formally verified to be correct within the soundness-critical \dL core}
\label{fig:architecture}
\end{figure}

\section{Invariant Generation Methods in Pegasus}
\label{sec:pegasus}
Pegasus is a continuous invariant generator implemented in the Wolfram Language with an interface accessible through both Mathematica and \KeYmaeraX.\footnote{Pegasus (\url{http://pegasus.keymaeraX.org/})
is linked to \KeYmaeraX through the Mathematica interface of \KeYmaeraX, which translates between the internal data structures of the prover core and the Mathematica data structures.}
When \KeYmaeraX is faced with a continuous safety verification problem that it is unable to prove directly, it automatically invokes Pegasus to help find an appropriate invariant (if possible).
\KeYmaeraX checks \emph{all} the invariants it is supplied with---\emph{including those provided by Pegasus} (see~\rref{fig:architecture}).
This design ensures that any correctness issues in Pegasus cannot compromise the soundness of \KeYmaeraX.
It also presents implementation opportunities:%

\begin{enumerate}
\item Pegasus can freely integrate numerical procedures and heuristic methods while providing \emph{best-effort} guarantees of correctness. Final correctness checks for the generated invariants are left to the purview of \KeYmaeraX.\footnote{Naturally, the output from Pegasus can also be checked using a trusted implementation of the LZZ decision procedure before anything is returned. When used with \KeYmaeraX, though, this additional (soundness-critical) check is unnecessary.}

\item Pegasus records \emph{proof hints} corresponding to the various methods that were used to generate continuous invariants.
These hints enable \KeYmaeraX to build more efficient shortcut proofs of continuous invariance~\cite{DBLP:journals/cl/GhorbalSP17}.
\end{enumerate}

Pegasus currently implements an array of powerful invariant generation methods,
which we describe below, beginning with a large family of related methods that are based
on \emph{qualitative analysis}, which can be best explained using the machinery of
\emph{discrete abstraction} of continuous systems. We first briefly recall the main idea
behind this approach.

\subsection{Exact Discrete Abstraction}\label{sec:ExactDiscreteAbstraction}
Discrete abstraction is the subject of numerous works~\cite{Alur2000,Tiwari2008FMSD,TiwariKhanna2002HSCC}.
Briefly, the steps are:
\begin{enumerate*}[label=(\roman*)] \item discretize the continuous state space of a system by defining \emph{predicates} that correspond to discrete states, \item \label{discabsstep2}compute a (local) transition relation between the discrete states obtained from the previous step, yielding a discrete transition system which abstracts the behavior of the original continuous system, and finally \item compute reachable sets in the discrete abstraction to obtain an over-approximation of the reachable sets of the continuous system.
\end{enumerate*}

A discrete abstraction is \emph{sound} iff the relation computed in step~\ref{discabsstep2} has a transition between two discrete states whenever there is a corresponding trajectory of the original continuous system between the two neighboring sets corresponding to those discrete states.
The abstraction is \emph{exact} iff these are the \emph{only} transitions computed in step~\ref{discabsstep2}.
Soundness of the discrete abstraction guarantees that any invariant extracted from the discretization corresponds to an invariant for the original system.
Exactness implies that no invariants are lost that are representable in the abstraction at all.
\begin{figure}[ht]
    {\setlength{\belowcaptionskip}{0pt}
      \centering
      ~ %
      \begin{subfigure}[b]{0.45\textwidth}
         \centering
\begin{tikzonimage}[scale=0.3]{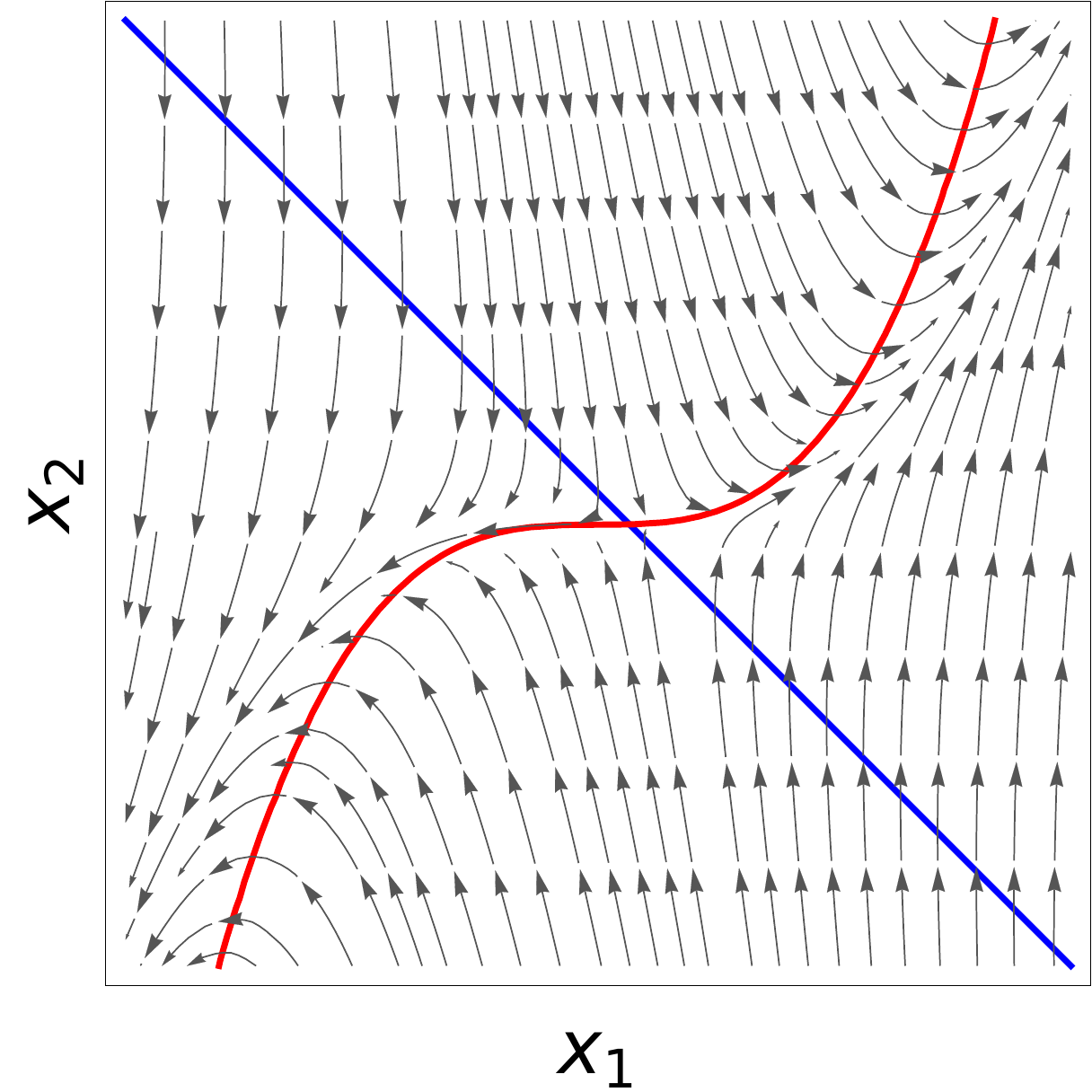}
    \node[draw,fill=white,circle,inner sep=0pt,font=\sffamily\normalsize\bfseries, scale=0.65] at (0.57,0.52) {$S_5$};
    \node[draw,fill=white,circle,inner sep=0pt,font=\sffamily\normalsize\bfseries, scale=0.65] at (0.24,0.85) {$S_1$};
    \node[draw,fill=white,circle,inner sep=0pt,font=\sffamily\normalsize\bfseries, scale=0.65] at (0.85,0.8) {$S_3$};
    \node[draw,fill=white,circle,inner sep=0pt,font=\sffamily\normalsize\bfseries, scale=0.65] at (0.55,0.9) {$S_2$};
    \node[draw,fill=white,circle,inner sep=0pt,font=\sffamily\normalsize\bfseries, scale=0.65] at (0.3,0.6) {$S_4$};
    \node[draw,fill=white,circle,inner sep=0pt,font=\sffamily\normalsize\bfseries, scale=0.65] at (0.85,0.4) {$S_6$};
    \node[draw,fill=white,circle,inner sep=0pt,font=\sffamily\normalsize\bfseries, scale=0.65] at (0.23,0.2) {$S_7$};
    \node[draw,fill=white,circle,inner sep=0pt,font=\sffamily\normalsize\bfseries, scale=0.65] at (0.55,0.3) {$S_8$};
    \node[draw,fill=white,circle,inner sep=0pt,font=\sffamily\normalsize\bfseries, scale=0.65] at (0.89,0.2) {$S_9$};
    \node[draw=red,rotate=61,fill=red!10, scale=0.7, inner sep=1pt] at (0.3,0.35) {\color{red} $p_1=0$};
    \node[draw=blue,rotate=-45,fill=blue!10, scale=0.7, inner sep=1pt] at (0.4,0.7) {\color{blue} $p_2=0$};
\end{tikzonimage}
          \caption{Discretization with $p_1$,~$p_2$ $\sim 0$}
         \label{fig:discretization}
      \end{subfigure}
       \qquad
      \begin{subfigure}[b]{0.45\textwidth}
         \centering
\begin{tikzpicture}[transform shape, scale=0.65,node distance=2cm,state node/.style={circle,draw,font=\sffamily\normalsize\bfseries}]
   \node[state node] (M1) [inner sep=0pt] {$S_1$};
   \node[state node] (M2) [right of=M1, inner sep=0pt] {$S_2$};
   \node[state node] (M3) [right of=M2, inner sep=0pt] {$S_3$};
   \node[state node] (M4) [below of=M1, inner sep=0pt] {$S_4$};
   \node[state node] (M5) [right of=M4, inner sep=0pt] {$S_5$};
   \node[state node] (M6) [right of=M5, inner sep=0pt] {$S_6$};
   \node[state node] (M7) [below of=M4, inner sep=0pt] {$S_7$};
   \node[state node] (M8) [right of=M7, inner sep=0pt] {$S_8$};
   \node[state node] (M9) [right of=M8, inner sep=0pt] {$S_9$};
  \draw[->,>=stealth,bend right=20] (M1) to node [above=1pt] { } (M4);
  \draw[->,>=stealth,bend right=20] (M7) to node [above=1pt] { } (M4);
  \draw[->,>=stealth,bend right=20] (M2) to node [below=2pt] { } (M1);
  \draw[->,>=stealth,bend right=25] (M8) to node [sloped, anchor=south, above=1pt] { } (M7);
  \draw[->,>=stealth,bend right=25] (M8) to node [sloped, anchor=south, above=1pt] { } (M9);
  \draw[->,>=stealth,bend right=25] (M9) to node [sloped, anchor=south, above=1pt] { } (M6);
  \draw[->,>=stealth,bend right=25] (M2) to node [sloped, anchor=south, above=1pt] { } (M3);
  \draw[->,>=stealth,bend right=25] (M3) to node [sloped, anchor=south, above=6pt] { } (M6);
\end{tikzpicture}
          \vspace{15pt}
         \caption{Sound discrete abstraction}
         \label{fig:abstraction}
      \end{subfigure}}%
     \caption{Discrete abstraction of a two-dimensional system}
      \label{fig:abstr}
   \end{figure}

Figure~\ref{fig:abstr} illustrates a discretization of a system of ODEs (\rref{fig:discretization}), which results in $9$ discrete states in a sound and exact abstraction (\rref{fig:abstraction}).
The state space is discretized using predicates built from sign conditions on polynomials, $p_1,p_2 \in
\mathbb{R}[x_1,x_2]$.\footnote{\emph{Sign conditions} on a polynomial $p$ are atomic formulas $p<0$, $p=0$, and $p>0$.}
The discrete states of the abstraction are given by formulas such as $S_1 \equiv
p_1<0 \land p_2=0$, $S_2 \equiv p_1<0 \land p_2>0$, and so on.
The question whether there should be a discrete transition from $S_1$ to $S_2$ in the abstraction may be equivalently cast as
the following question: is $S_1$ a continuous invariant of the system
$\vec{x}' = f(\vec{x})$ under evolution domain constraint $S_1 \lor S_2$, i.e. is the following \dL formula valid?
\[ 
S_1 \to [\vec{x}' = f(\vec{x})\,\&\,S_1 \lor S_2]\, S_1\enspace.
\]

This question can be answered with a decision procedure such as LZZ or formally proved/disproved using \dL, as discussed in Section~\ref{sec:checking}. If $S_1$ is a continuous invariant under this evolution domain constraint, then there are no states satisfying $S_1$ from which the system continuously evolves into a state satisfying $S_2$ along a trajectory that remains within the union $S_1 \cup S_2$ and thus there should not be a transition from $S_1$ to $S_2$ if the discrete abstraction is to be exact; on the other hand, if $S_1$ is not a continuous invariant, then there must be such a transition if the abstraction is to be sound.

The ability to construct sound and exact discrete abstractions~\cite{DBLP:conf/vmcai/SogokonGJP16} has an important consequence:
if an appropriate semi-algebraic continuous invariant $I$ exists at all, it can always be extracted from a discrete abstraction built from discretizing the state space using sign conditions on the polynomials describing~$I$.
The problem of (semi-algebraic) invariant generation therefore reduces to finding appropriate polynomials whose sign conditions can yield suitable discrete abstractions and computing reachable states in these abstractions.

\begin{remark}
    Reachable sets (from the initial states) in discrete abstractions are the
    smallest invariants with respect to $\subseteq$ (set inclusion) that are representable in that abstraction. The smallest invariant is the most informative because it
    allows one to prove the most safety properties, but it may not be the most
    useful invariant in practice.
    In particular, one often wants to work with invariants that have \emph{low
    descriptive complexity} and are easy to prove in the formal proof calculus.
    This leads naturally to consider alternative ways of extracting invariants.
    Pegasus is able to extract reachable sets of discrete abstractions, but
    favours less costly techniques, such as \emph{differential saturation}~\cite{DBLP:journals/fmsd/PlatzerC09},
    which often succeed in more quickly extracting more conservative invariants.
\end{remark}

Finding ``good'' polynomials that can abstract the system in useful ways
and allow proving properties of interest is generally difficult.
While abstraction using predicates that are extracted from the
verification problem itself can be surprisingly effective, in certain
cases useful predicates may not be syntactically extracted from the problem statement.
In order to improve the quality of discrete abstractions, Pegasus employs a
separate \emph{classifier}, which extracts
features from the verification problem which can then be used to suggest
polynomials that are more tailored to the problem at hand.
Certain systems have structure that, to a human expert, might suggest an ``obvious''
choice of good predicates. Below we sketch some basic examples of what is currently possible.

\subsection{Targeted Qualitative Analysis}

As a motivating example, consider the class of one-dimensional
ODEs \mbox{$x'=f(x)$}, where $f\in \mathbb{R}[x]$. A standard way of studying qualitative
behavior in these systems is to inspect the graph of the function $f(x)$~\cite{Strogatz2001}.
Figure~\ref{fig:onedim} illustrates such a graph of $f(x)$, along with a vector
field induced by such a system on the real line.
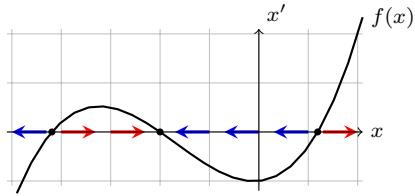
\begin{figure}[h!]
    \centering
    \begin{tikzpicture}[domain=-4.9:2.1, scale=0.65]
        \draw[very thin,color=lightgray] (-5.1,-1.1) grid (2.1,2.1);
        \draw[>=stealth, very thick, color=red, ->] (-4.0,0) -- (-3.3,0);
        \draw[>=stealth, very thick, color=red, ->] (-3.0,0) -- (-2.3,0);
        \draw[>=stealth, very thick, color=red, ->] (1.3,0) -- (2.0,0);
        \draw[>=stealth, very thick, color=blue, <-] (0.3,0) -- (1.0,0);
        \draw[>=stealth, very thick, color=blue, <-] (-0.7,0) -- (0,0);
        \draw[>=stealth, very thick, color=blue, <-] (-1.7,0) -- (-1,0);
        \draw[>=stealth, very thick, color=blue, <-] (-5,0) -- (-4.3,0);
        \draw[->] (-5.1,0) -- (2.1,0) node[right] {$x$};
        \draw[->] (0,-1.2) -- (0,2.1) node[above right] {$\D{x}$};
        \draw[color=black, thick] plot[id=sin,mark=none] (\x, {0.1*(\x)^3+0.5*(\x)^2+0.1*(\x)-1})  node[right] {$f(x)$};
        \draw[black, fill=black] (1.19,0) circle (.5ex);
        \draw[black, fill=black] (-4.19,0) circle (.5ex);
        \draw[black, fill=black] (-2,0) circle (.5ex);
    \end{tikzpicture}
    \caption{Qualitative analysis of one-dimensional ODEs $\D{x} = f(x)$}
    \label{fig:onedim}
\end{figure}
The ODE $\D{x}=f(x)$ is at an \emph{equilibrium} without any motion at points where $f(x)=0$.
By computing the real roots of the polynomial in the right-hand side, i.e the real
roots $r_1,\dots,r_k \in \mathbb{R}$ of $f(x)$, we may form a list of polynomials
$x-r_1,\dots,x-r_k$ that can be used for an \emph{algebraic decomposition} of $\mathbb{R}$
into invariant subregions corresponding to real intervals from which an over-approximation
of the reachable set can be constructed. Such an algebraic decomposition can be further
refined by augmenting the list of polynomials with $x-b_1,\dots,x-b_l$,
where $b_1,\dots,b_l \in \mathbb{R}$ are the boundary points
of the initial set in the safety specification. From this augmented list, one can
exactly construct the \emph{reachable set} of the system by computing the reachable set of the corresponding exact abstraction.

\begin{remark}
    If $\D{x} = f(x)$ is one-dimensional, one can exploit another useful fact:
every one-dimensional system is a \emph{gradient system}, i.e. its motion is generated by a \emph{potential
function} $F(x)$
which can be computed directly by integrating $-f(x)$ with respect to $x$, i.e. \(F(x) = -\int f(x)\,dx\).
For any $k \in \mathbb{R}$, $F(x)\leq k$ defines a continuous invariant of the one-dimensional system \(\D{x} = f(x)\).
\end{remark}

In higher dimensions, the behavior of \emph{linear} systems $\D{\vec{x}} = A\vec{x}$
with a constant coefficient matrix $A$ can be studied qualitatively by examining the eigenvalues and eigenvectors\footnote{%
A vector $\vec{v}\in\reals^n$ is an \emph{eigenvector} for \emph{eigenvalue} $\lambda\in\complex$ of matrix $A\in\reals^{n\times n}$ iff \(A\vec{v}=\lambda\vec{v}\).
In direction $\vec{v}$, the ODE \(\D{\vec{x}}=A\vec{x}\), thus, converges to 0 if $\lambda<0$ or diverges if $\lambda>0$.
}  of the matrix $A$~\cite{arrowsmith1992}.
Pegasus implements methods targeted at linear systems that take advantage of facts such as these to suggest useful abstractions from which invariants can be extracted. The current strategy is similar in spirit to the abstraction methods proposed in the work of Tiwari~\cite{Tiwari2003}, and works by computing linear forms describing the invariant half-spaces in the state space of linear systems. Briefly, whenever the system matrix $A$ has a real eigenvalue $\lambda \in \mathbb{R}$, by considering an eigenvector $\vec{v}$ of the \emph{transpose} matrix $A^T$, which is associated with the eigenvalue $\lambda$ (recall that the eigenvalues of square matrices $A$ and $A^T$ are the same), one may construct the linear form $p = \vec{v}^T \vec{x}$, which has the property that~\cite[\S 2]{Tiwari2003}:
\[
	p' = \vec{v}^T \vec{x}' = \vec{v}^TA\vec{x} = (A\vec{v})^T\vec{x} = (\lambda \vec{v})^T\vec{x} = \lambda p\enspace.
\]
Such linear forms correspond to a special case of so-called \emph{Darboux polynomials}, which will be described in more detail in \rref{subsec:darboux} and have the property that $p>0$, $p=0$, and $p<0$ define invariant regions in state space (the fact that $\lambda$ is a real number also allows us to construct invariants $p \leq k$, where $k$ is an appropriately chosen offset depending on the sign of $\lambda$).

Additionally, when all the eigenvalues of the system matrix $A$ have strictly negative real parts, the origin $\vec{0}$ is asymptotically stable and one may construct a \emph{Lyapunov function} (see~\cite[Ch. 3]{SlotineLi1991},\cite[Ch. 3]{Khalil}) for the linear system by solving the \emph{Lyapunov equation} $A^TP + PA = Q$ where $Q$ is some given negative-definite matrix\footnote{An $n \times n$ matrix $Q$ is \emph{negative-definite} if it is symmetric, i.e. $Q = Q^T$, and $\vec{x}^TQ\vec{x}<0$ for all $\vec{x} \in \mathbb{R}^n \setminus \{\vec{0}\}$; a symmetric matrix $P$ is \emph{positive-definite} if $\vec{x}^TP\vec{x}>0$ for all $\vec{x} \in \mathbb{R}^n \setminus \{\vec{0}\}$.}, and the solution $P$ is positive-definite (see~\cite[Ch. 3, \S 3.5]{SlotineLi1991}); the quadratic Lyapunov function $V$ for the stable system is given by $V(\vec{x}) = \vec{x}^TP\vec{x}$. Every sub-level set $V \leq k$ defines a continuous invariant of the system; \rref{fig:ex1} (right) illustrates the kind of invariants that can be obtained by using Lyapunov functions together with invariant half-planes to perform abstraction of linear systems.

\begin{example}
\begingroup
\setlength{\intextsep}{0pt}
\setlength{\belowcaptionskip}{10pt}

The linear systems in \rref{fig:ex1} exhibit different qualitative behaviors.
The invariants (shown in blue), demonstrate unreachability of the unsafe states (shown in red) from the initial states (shown as green disks in \rref{fig:ex1}).
\begin{figure}[bhtp]
    \centering
    \begin{align*}
        &  \quad &  x_1'&=-4x_2,  &\quad \quad  x_1'&= 2x_1 - x_2, &~ x_1'&=-2x_1 + x_2,  & &  \\
        &  \quad &  x_2'&=x_1,    &\quad \quad  x_2'&= -3x_1 +x_2, &~ x_2'&=x_1 - 3x_2.   & &
    \end{align*}
    \includegraphics[width=0.8\columnwidth,trim=5 11 5 11,clip]{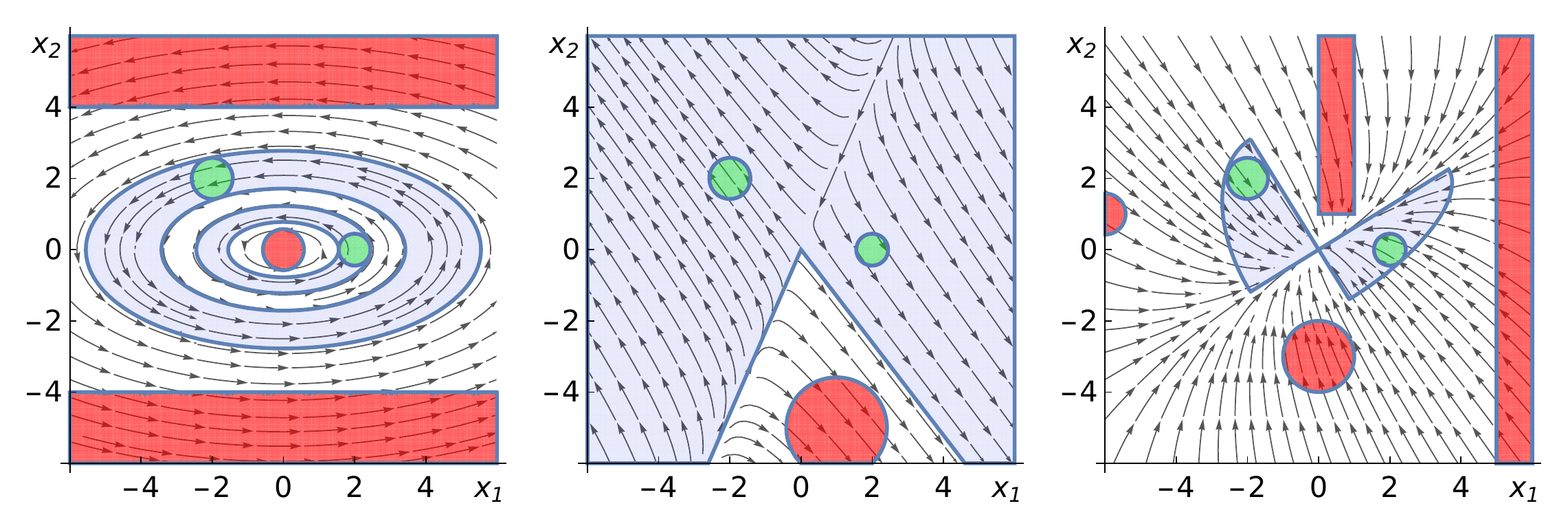}
    \caption{Automatically generated invariants for linear systems}
    \label{fig:ex1}
\end{figure}

In the leftmost system, all eigenvalues of the system matrix $A$ are purely imaginary. Pegasus generates annular invariants containing the green disks because trajectories of such systems are always elliptical.
For the middle system, the (asymptotic) behavior of its trajectories is determined by the eigenvectors of its system matrix (eigenvalues are real and of opposite sign~\cite{arrowsmith1992}). Pegasus uses these eigenvectors to generate two invariant half-planes, one for each green disc.
Invariant half-planes are also generated for the rightmost system which is asymptotically \emph{stable} (all real parts of eigenvalues are negative~\cite{arrowsmith1992}). Pegasus further refines these half-planes with suitable elliptical regions containing the green disks because elliptical regions are invariants for such systems.

\endgroup
\end{example}

\begin{dangerous}
	In textbook examples of linear systems, one usually finds matrices with eigenvalues and eigenvectors that can be described using rational numbers. However, the situation is not always that nice in practice: eigenvectors of matrices will often feature \emph{irrational} components, which in the case of the example above leads to invariant half-planes described by linear polynomials with irrational coefficients. It is therefore important to have the means of working with irrational real numbers in the invariant generator and the prover.
\end{dangerous}

In special cases when the verification problem features a purely \emph{algebraic initial set}, the strongest algebraic invariants for linear systems (i.e. the smallest continuous invariants that can be described by polynomial equalities $p=0$) can be computed following the method of Rodr\'iguez-Carbonell \& Tiwari~\cite{DBLP:conf/hybrid/Rodriguez-CarbonellT05}, which we implement in Pegasus.

\begin{remark}
Bogomolov \emph{et al.}~\cite{DBLP:conf/formats/BogomolovGHK17} introduced a technique called \emph{conic abstractions} that combines discrete abstraction of affine systems with an associated reachability analysis method.
	It is particularly powerful for systems $\vec{x}' = A\vec{x}$ in which the matrix $A$ is diagonalizable\footnote{The matrix $A$ is \emph{diagonalizable} iff it can be written as $A=PDP^{-1}$ for some invertible matrix $P$ and diagonal matrix $D$.}, where the authors' experiments suggest it outperforms other tools for linear reachability analysis, like SpaceEx~\cite{DBLP:conf/cav/FrehseGDCRLRGDM11}.
The eponymous idea behind the method is to partition state space into a number of regions (i.e., \emph{cones}), so that within each cone the change in angle of the vector field (i.e., the \emph{twisting}) is bounded by a tunable parameter $\theta$.
Given any point in the vector field, then, this construction gives a known range of possible slopes for the vector at that point.
	This is useful information for the subsequent reachability analysis---instead of simply computing the transition relation between neighboring cones, as in~\rref{sec:ExactDiscreteAbstraction}, the algorithm~\cite{DBLP:conf/formats/BogomolovGHK17} uses the twisting information to determine what portions of each cone is potentially reachable from an initial set.
We experimented with the conic abstraction method in a limited setting: bounded linear 2-dimensional systems.
The major obstacle inhibiting a complete implementation is that Mathematica's native support for polyhedra computations does not quite meet the demands of the algorithm.
Our limited implementation is not able to return an exact invariant region---instead, we produce promising visualizations of the invariant generated for two examples from~\rref{fig:ex1} (see~\rref{fig:cabs}).\footnote{The conic abstractions approach does not work directly with the leftmost example from~\rref{fig:ex1} because the example's system matrix has purely imaginary eigenvalues and is consequently not diagonalizable (a key requirement for termination of the approach~\cite{DBLP:conf/formats/BogomolovGHK17}).}
With better support for polyhedra computations, this could be an exciting direction for future implementation by interfacing Pegasus with the Parma Polyhedra Library.
\end{remark}

\begin{figure}[h!]

\centering

     \begin{subfigure}[b]{0.2\textwidth}
     \begin{align*}
     &\,\quad x_1'= 2x_1 - x_2, \\
     &\,\quad x_2'= -3x_1 +x_2. \\
     &\text{(\rref{fig:ex1} middle example)}\\~\\~\\~\\
     \end{align*}
    \end{subfigure}
    \quad
     \begin{subfigure}[b]{0.28\textwidth}
        \flushleft
        \includegraphics[width=\textwidth]{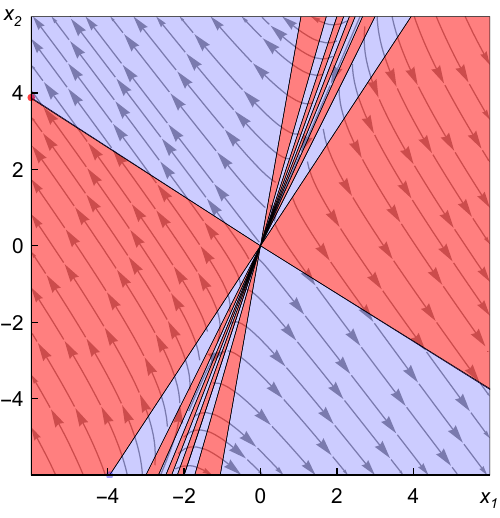}
    \end{subfigure}
    \qquad
    \begin{subfigure}[b]{0.28\textwidth}
        \includegraphics[width=\textwidth]{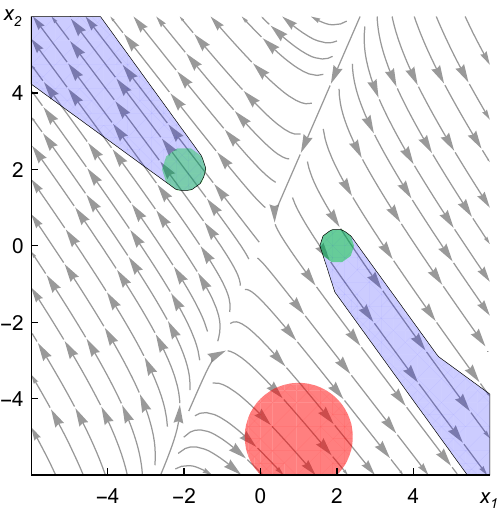}
    \end{subfigure}\\
     \begin{subfigure}[b]{0.2\textwidth}
     \begin{align*}
     &\,\quad x_1'=-2x_1 + x_2, \\
     &\,\quad x_2'=x_1 - 3x_2. \\
     &\text{(\rref{fig:ex1} right example)}\\~\\~\\~\\
     \end{align*}
    \end{subfigure}
    \quad\,
    \begin{subfigure}[b]{0.28\textwidth}
        \flushleft
        \includegraphics[width=\textwidth]{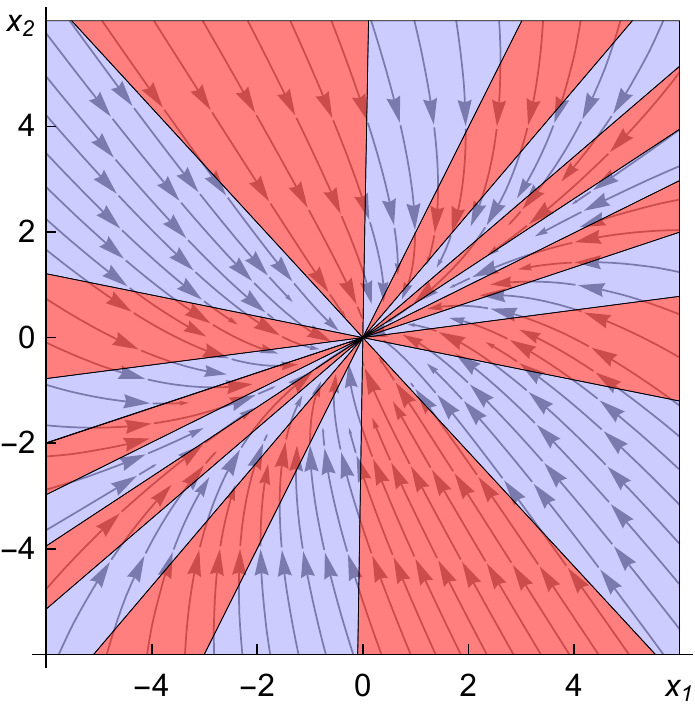}
    \end{subfigure}
    \qquad
    \begin{subfigure}[b]{0.28\textwidth}
        \includegraphics[width=\textwidth]{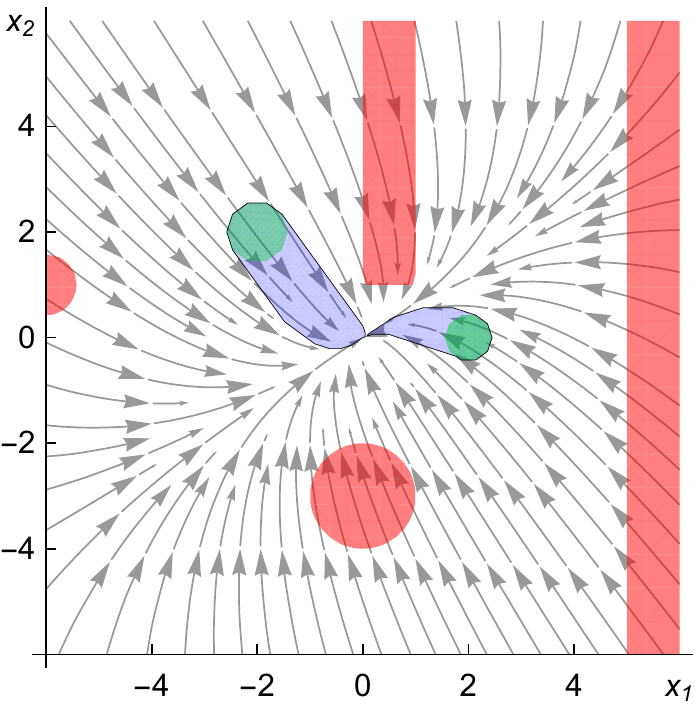}
    \end{subfigure}

	\caption{A visualization of our implementation of the conic abstractions method (each example is shown row-wise).
The left figures show the generated conic partition into $20$ cones (alternating red and blue colors).
The right figures show the reachable set computation (in blue) from the same green initial sets as in~\rref{fig:ex1}.
These reachable sets, which are invariant sets, suffice to show that the ODE never reaches any unsafe states (in red).
The method automatically produces finer partitions of the state space (using more cones) when the direction of the vector field changes more drastically.
The top partition concentrates several cones around its unstable manifold~\cite{Chicone2006,Strogatz2001} (the line $ y = \frac{1}{6} (1+\sqrt{13}) x $), while the bottom partition has more evenly spaced out cones.}
\label{fig:cabs}
\end{figure}

\subsection{Qualitative Analysis for Non-Linear Systems}
General non-linear polynomial systems of ODEs present a hard class of problems for invariant generation.
A number of useful heuristics can be applied to partition the continuous state space of these systems, in the hope that the resulting abstraction exhibits a suitable invariant.
For example, factorizing the RHS of a differential equation \(x_i'=f_i(x)\) yields a set of irreducible polynomial factors $p_1,\dots,p_k$ such that $f_i = \prod_{j=1}^k p_j$, which implies that the flow along the curves $p_j=0$ vanishes in the $x_i$ direction.
This information can be used to cheaply approximate the transition relation in the discrete abstraction and to efficiently extract \emph{invariant candidates}.
For the non-linear ODE in~\rref{fig:abstr}, the discretization polynomials $p_1,p_2$ are chosen such that $x_2'=0$ and $x_1'=0$ on their respective level curves.
This yields a useful discrete abstraction e.g. $S_4$ is an invariant for the resulting abstraction (\rref{fig:abstraction}).
Other useful sources of polynomials for qualitative analysis of non-linear systems
are found in, e.g. the summands and irreducible factors of the right-hand sides of the ODEs, the Lie derivatives
of the factors, and physically meaningful quantities such as the \emph{divergence}
of the system's vector field.

\begin{figure}[h]
    \centering
    \includegraphics[width=0.8\columnwidth,trim=0 11 5 11,clip]{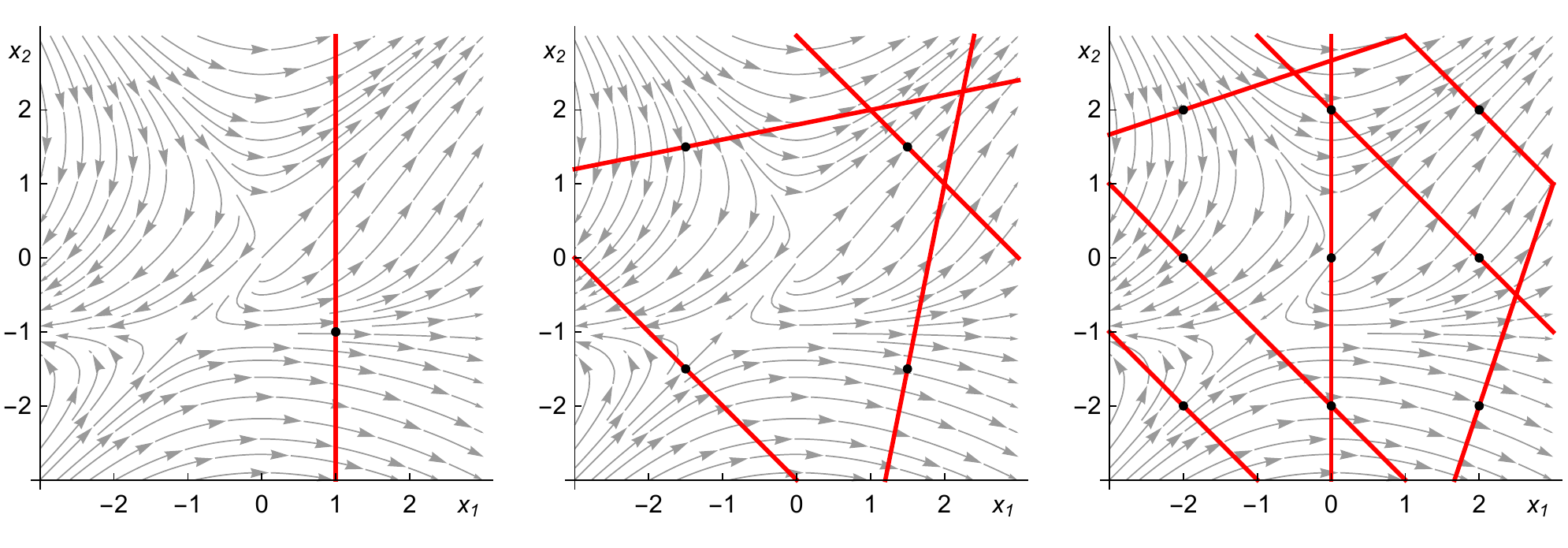}
    \caption{Abstractions using locally transverse linear forms (shown as red lines) generated from a grid of points (in black)}
    \label{fig:locorth}
\end{figure}

\paragraph{Locally transverse linear forms}
A simple geometric idea can sometimes help generate linear polynomials for abstraction. 
For a system of ODEs $\vec{x}'=f(\vec{x})$, which may be non-linear,
and a regular point $\vec{x}_0 \in \mathbb{R}^n$ with $f(\vec{x}_0)\neq \vec{0}$, one may construct the linear
form $f(\vec{x}_0)\cdot(\vec{x}-\vec{x}_0)$, which has the property that its
zero set is locally \emph{transverse} to the vector field near $\vec{x}_0$.\footnote{
By continuity of $f(\cdot)$, the vectors $f(\vec{x})$ are sufficiently close to $f(\vec{x}_0)$ for points $\vec{x}$ in a small neighborhood around $\vec{x}_0$. Therefore, all ODE solutions in this neighborhood can only cross $f(\vec{x}_0)\cdot(\vec{x}-\vec{x}_0)$ in the same direction as $f(\vec{x}_0)$.}
With a sufficiently fine partitioning using regular points, one has a good chance of finding invariant regions in the abstraction.
In problems where the evolution domain constraint describes a bounded set, it
is possible to obtain useful abstractions by choosing a finite number of regular
points $\vec{x}_0$ within the set and partitioning the constraint with the corresponding
locally transverse linear forms (as illustrated in~\rref{fig:locorth}). Of
course, choosing ``good'' points is the main problem in this method; one
possibility is to use evenly-spaced points forming a grid covering the evolution domain constraint.

\subsection{General-Purpose Methods}
\label{sec:generic}

Beyond qualitative analysis, Pegasus implements several general-purpose
invariant generation techniques which represent \emph{restricted, but tractable fragments}
of the general method of template enumeration. The search for symbolic parameters
in these methods is \emph{not} performed using real quantifier elimination, but
instead takes place in more tractable theories.

\subsubsection{Polynomial First Integrals}
\label{subsec:polyFI}

A polynomial $p \in \mathbb{R}[\vec{x}]$ is a \emph{first integral}~\cite[2.4.1]{Goriely2001} (also see~\cite[\S 23]{Pontryagin1962}) of the system \mbox{$\D{\vec{x}} = f(\vec{x})$} iff its Lie derivative $\D{p}$ with respect to the vector field $f$ is the zero polynomial.
First integrals are also known as \emph{conserved quantities} because they have an important property: their value never changes along the solutions to ODEs; that is to say, for any $k\in\mathbb{R}$, $p=k$ is an invariant of the system.\footnote{Strictly speaking, first integrals and conserved quantities are \emph{not the same}: a first integral may only be considered a conserved quantity in regions where it is defined. In this case, however, polynomial functions are defined everywhere in $\mathbb{R}^n$ and the two notions coincide.}
For a single first integral $p$, if one were to use (the signs of) the polynomial $p-k$ to build an abstraction, the abstract state space would not feature any transitions between its states (illustrated in \rref{fig:FI}).
Thus, one has the freedom to choose values $k$ for which the resulting discrete abstraction suitably partitions the state space.
For example, if the initial states lie entirely within $p < k$ and the unsafe ones within $p > k$, then $p < k$ is an invariant separating those sets.

\begin{figure}[h!]
\centering

\begin{tikzpicture}[transform shape, scale=1,node distance=3cm,state node/.style={circle,draw,font=\sffamily\normalsize\bfseries}, scale=0.75]
   \node[state node] (M1) [inner sep=2pt] {$p<k$};
   \node[state node] (M2) [right of=M1, inner sep=2pt] {$p=k$};
   \node[state node] (M3) [right of=M2, inner sep=2pt] {$p>k$};
\end{tikzpicture}
\caption{Discrete abstraction with first integral $p-k$ ($k\in\mathbb{R}$)}
\label{fig:FI}
\end{figure}
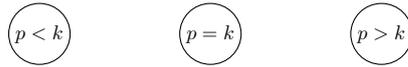

Pegasus can search for \emph{all} polynomial first integrals up to a configurable degree bound by solving a system of \emph{linear equations} whose solutions provide the coefficients of the bounded degree polynomial template for the first integral.
This is known as the \emph{method of undetermined coefficients}; we illustrate the main steps of the method in the following example.

\begin{example}[Kasner's equations]
	\label{ex:kasner}
	Consider the non-linear system of ODEs describing a special case of Einstein's
	gravitational equations~\cite{kasner1925}
	\begin{align*}
		x_1'&=x_2x_3 -x_1^2\,, \\
		x_2'&=x_3x_1 - x_2^2\,,\\
		x_3'&=x_1x_2-x_3^2\,,
	\end{align*}
and a polynomial template of maximum degree $2$ in the state variables $x_1,x_2,x_3$:
    \[
	    p_{\textcolor{red}{\vec{a}},2} =
	    \textcolor{red}{a_0}
	    + \textcolor{red}{a_1}x_1
	    + \textcolor{red}{a_2}x_2
	    + \textcolor{red}{a_3}x_3
	    + \textcolor{red}{a_4}x_1^2
	    + \textcolor{red}{a_5}x_1x_2
	    + \textcolor{red}{a_6}x_1x_3
	    + \textcolor{red}{a_7}x_2^2
	    + \textcolor{red}{a_8}x_2x_3
	    + \textcolor{red}{a_9}x_3^2
	    \enspace.
    \]
	Computing the Lie derivative of this template with respect to the system, i.e. $(p_{\textcolor{red}{\vec{a}},2})' = \frac{\partial p_{\textcolor{red}{\vec{a}},2}}{\partial x_1} x_1' + \frac{\partial p_{\textcolor{red}{\vec{a}},2}}{\partial x_2} x_2'+ \frac{\partial p_{\textcolor{red}{\vec{a}},2}}{\partial x_3} x_3'$ gives a degree $3$ parametric polynomial:
\begin{align*}
	(p_{\textcolor{red}{\vec{a}},2})' &=
	- \textcolor{red}{a_1}x_1^2
	+ \textcolor{red}{a_3}x_1x_2
	+ \textcolor{red}{a_2}x_1x_3
	- \textcolor{red}{a_2}x_2^2
	+ \textcolor{red}{a_1}x_2x_3
	- \textcolor{red}{a_3}x_3^2
	- 2\textcolor{red}{a_4}x_1^3 \\
	&\quad
	+ (\textcolor{red}{a_6}- \textcolor{red}{a_5})x_1^2x_2
	+ (\textcolor{red}{a_5}- \textcolor{red}{a_6})x_1^2x_3
	+ (\textcolor{red}{a_8}- \textcolor{red}{a_5})x_1x_2^2\\
	&\quad
        + (2\textcolor{red}{a_4} + 2\textcolor{red}{a_7} + 2\textcolor{red}{a_9})x_1x_2x_3
	+ (\textcolor{red}{a_8}- \textcolor{red}{a_6})x_1x_3^2
	- 2\textcolor{red}{a_7}x_2^3 \\
	&\quad
	+ (\textcolor{red}{a_5}- \textcolor{red}{a_8})x_2^2x_3
	+ (\textcolor{red}{a_6}- \textcolor{red}{a_8})x_2x_3^2
	- 2\textcolor{red}{a_9}x_3^3\enspace.
\end{align*}
In order to find a first integral, one is required to solve the equation $(p_{\textcolor{red}{\vec{a}},2})' = 0$, but a polynomial is $0$ precisely when all of its coefficients are $0$.
Thus, by equating all coefficients of the Lie derivative to $0$, finding a first integral reduces to solving a \emph{linear system of equations} over the symbolic coefficients $\textcolor{red}{a_0}$, \dots ,$\textcolor{red}{a_9}$:
\begin{align*}
	&{-}\textcolor{red}{a_1}=0,
	 \textcolor{red}{a_3}=0,
	 \textcolor{red}{a_2}=0,
	{-}\textcolor{red}{a_2}=0,
	 \textcolor{red}{a_1}=0,
	{-}\textcolor{red}{a_3}=0,
	{-}2\textcolor{red}{a_4}=0,
	 (\textcolor{red}{a_6}- \textcolor{red}{a_5})=0, \\
	&\quad
	 (\textcolor{red}{a_5}- \textcolor{red}{a_6})=0,
	 (\textcolor{red}{a_8}- \textcolor{red}{a_5})=0,
	 (2\textcolor{red}{a_4}+ 2\textcolor{red}{a_7}+ 2\textcolor{red}{a_9})=0,
	 (\textcolor{red}{a_8}- \textcolor{red}{a_6})=0, \\
	&\quad
	{-}2\textcolor{red}{a_7}=0,
	 (\textcolor{red}{a_5}- \textcolor{red}{a_8})=0,
	 (\textcolor{red}{a_6}- \textcolor{red}{a_8})=0,
	{-}2\textcolor{red}{a_9}=0
	\enspace.
\end{align*}
Solutions are efficiently found using linear algebra~\cite[\S 2.4.1]{Goriely2001}.
In this example, a non-trivial solution yields the polynomial first integral $x_1x_2+x_1x_3+x_2x_3$.
Moreover, \emph{all} first integrals of degree (up to) two provide concrete instances of the coefficients $\textcolor{red}{a}$ and so must correspond to a solution of these equations.
\end{example}

When a polynomial first integral $p$ is computed, one has the freedom of choosing its initial value, which is guaranteed to remain invariant throughout the evolution of the system. In the above example, one may choose any real number $k$ and partition the state space into invariant regions defined by the sign conditions on the polynomial $x_1x_2+x_1x_3+x_2x_3 -k$. To obtain a tight over-approximation of the reachable set from the initial set of states given in the verification problem, one may choose $k$ by maximizing and minimizing the value of the first integral $p$ on the initial set of states within the evolution domain constraint, i.e., one may search for the real values (if they exist):
\[
		k_{\max} = \max_{\vec{x} \in \Init \cap Q}~ p(\vec{x})\,,
	\qquad
		k_{\min} = \min_{\vec{x} \in \Init \cap Q}~ p(\vec{x})\enspace.
\]

If finite values $k_{\max}$ and $k_{\min}$ can be obtained, one may generate a continuous invariant $k_{\min}\leq p \land p \leq k_{\max}$ (or just $p=k_{\min}$ if $k_{\max} = k_{\min}$).

\begin{dangerous}
	Maximizing/minimizing multivariate polynomials subject
	to semi-algebraic constraints often leads to irrational
	and \emph{real algebraic numbers} as exact maxima/minima. Numerical algorithms will
	yield values that are near-optimal, which may require them to be increased/decreased
	by some amount before a genuine invariant is constructed as described above.
\end{dangerous}

\begin{dangerous}
	The set $\Init \cap Q$ may have \emph{multiple connected components}, and tighter invariants may be obtained from first integrals when the value $k$ is optimized subject to each connected component separately.
  A cheap way to approximate the connected components is to normalize $\Init \land Q$ to disjunctive normal form and consider each disjunct as a separate component.
\end{dangerous}

If more than one independent first integral for a system is found, one may construct
finer abstractions and generate tighter invariants over-approximating the reachable set. A particularly
interesting case is when an $n$-dimensional system of ODEs has $n-1$ \emph{functionally independent}
algebraic first integrals: such a system is said to be \emph{algebraically integrable}~\cite{Goriely2001,olver2000}.
In such a system, given a state $\vec{x}_0 \in \mathbb{R}^n$, one may evaluate the
first integrals $p_1,p_2,\dots,p_{n-1}$ at that state to obtain a continuous invariant given by
\[
	p_1=p_1(\vec{x}_0) \land p_2=p_2(\vec{x}_0) \land \cdots \land p_{n-1}=p_{n-1}(\vec{x}_0)\enspace.
\]
If the first integrals are functionally independent, i.e. when the matrix
\[ [\nabla p_1~\nabla p_2~\cdots~\nabla p_{n-1}]\]
whose columns are formed by the gradients $\nabla p_i \equiv \left(\frac{\partial p_i}{\partial x_1}, \frac{\partial p_i}{\partial x_2}, \dots, \frac{\partial p_i}{\partial x_n}\right)^T$ has \emph{full rank} at $\vec{x}_0$ (i.e. when the vectors $\nabla p_i$ evaluated at $\vec{x}_0$ are linearly independent, see e.g.~\cite{olver2000}), the resulting conjunctive formula (locally) describes a $1$-dimensional invariant curve in $n$-dimensional state space and provides the tightest possible algebraic invariant containing $\vec{x}_0$.

\begin{example}[Algebraic integrability]
	\begin{figure}[ht]
\centering
\begin{subfigure}[b]{.45\textwidth}
\includegraphics[width=\textwidth]{./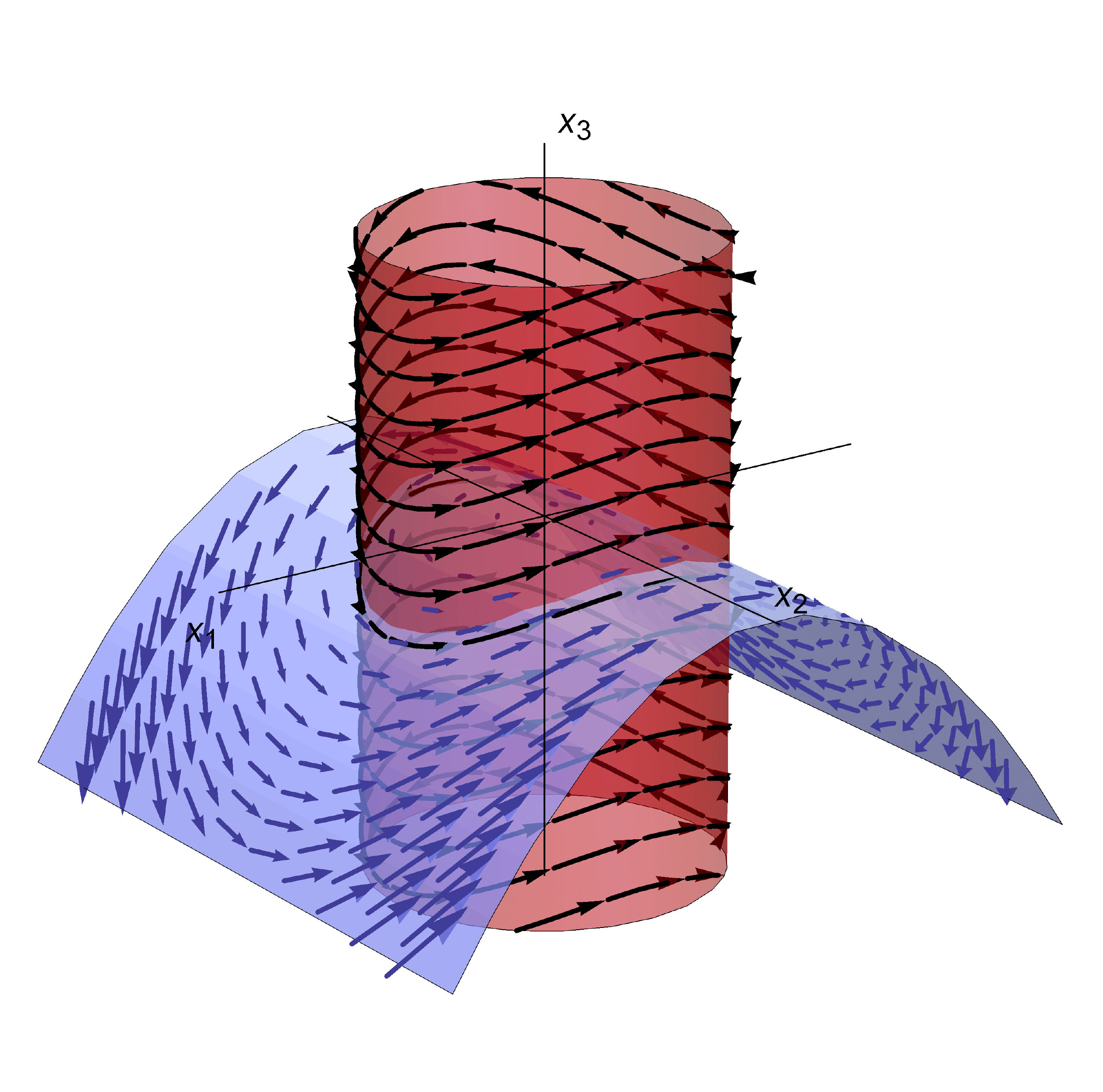}
	\caption{\label{subfig:algint1} Invariant surfaces $p_1=1$ and $p_2=0$}
\end{subfigure}
\quad
\begin{subfigure}[b]{.45\textwidth}
\includegraphics[width=\textwidth]{./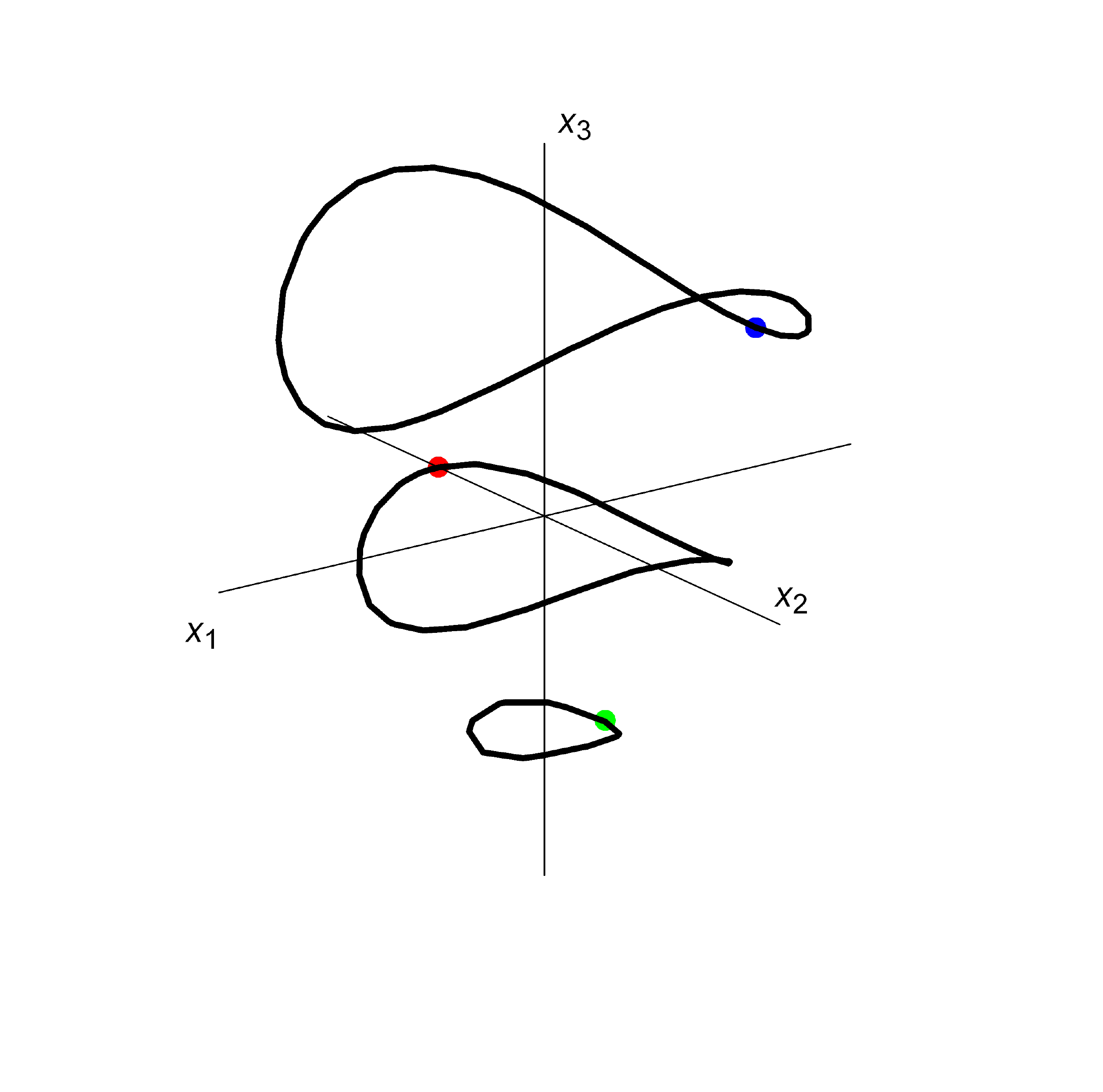}
	\caption{\label{subfig:algint2} Invariant curves through points}
\end{subfigure}
\vspace{5pt}
\caption{ Invariant level sets of two independent first integrals (left) whose intersections define invariant curves (right)}
\label{fig:AlgInt}
\end{figure}
 Consider the non-linear system
\begin{align*}
x_1' &= -x_2\,, \\
x_2' &= x_1\,, \\
x_3' &= x_1x_2\enspace.
\end{align*}
	Using a quadratic polynomial template $p_{\textcolor{red}{\vec{a}},2}$ and solving the linear system of equations corresponding to the equality $(p_{\textcolor{red}{\vec{a}},2})'=0$ as described in Example~\ref{ex:kasner}, one obtains the first integrals $p_1=x_1^2+x_2^2$ and $p_2=x_1^2+x_3$. The level sets described by $p_1=k_1$ and $p_2=k_2$ are invariants for any $k_1,k_2\in \mathbb{R}$. A level set of a first integral corresponds to an invariant surface to which the system's vector field is tangent at all points on the surface. For example, Fig.~\ref{subfig:algint1} illustrates two invariant surfaces of this system, which are described by $p_1=1$ (corresponding to the red cylinder) and $p_2=0$ (corresponding to the blue inverted parabolic surface).
	Taking $\vec{x}_0=(0,1,0)^T$, one can easily check that the first integrals $p_1$ and $p_2$ are functionally independent:
\[
	[\nabla p_1~\nabla p_2] =
\begin{bmatrix}
\frac{\partial p_1}{x_1} & \frac{\partial p_2}{x_1} \\
\frac{\partial p_1}{x_2} & \frac{\partial p_2}{x_2} \\
\frac{\partial p_1}{x_3} & \frac{\partial p_2}{x_3}
\end{bmatrix}
=
\begin{bmatrix}
2x_1 & 2x_1 \\
2x_2 & 0 \\
0 & 1
\end{bmatrix}
	\,, \text{ which at $\vec{x}_0$ becomes }~
\begin{bmatrix}
0 & 0 \\
2 & 0 \\
0 & 1 \\
\end{bmatrix}\,
\]
	and is full rank. Since the system of ODEs is $3$-dimensional and we have $2=3-1$ independent algebraic first integrals, this system is algebraically integrable.\footnote{In this example the first integrals are polynomial functions, but in general algebraic first integrals need \emph{not} be polynomial: e.g. they may be rational functions, as we shall see in Sec.~\ref{subsec:ratFI}.}
	Intuitively, the invariant level surfaces of first integrals will intersect transversally (i.e. will not be tangent) if the first integrals are functionally independent. Each such intersection results in an invariant which is of lower dimension: for example, the intersection of the two invariant surfaces in Fig.~\ref{subfig:algint1} (i.e. $p_1=1 \land p_2=0$) corresponds to the invariant \emph{space curve} -- a one-dimensional object in $3$-dimensional space -- which contains the point $\vec{x}_0 = (0,1,0)^T$, as illustrated in Fig.~\ref{subfig:algint2} by the middle curve going through the red point $\vec{x}_0$.\footnote{In fact, for this particular example this closed curve represents the \emph{periodic orbit} (see e.g.~\cite{Chicone2006}) of the system through the point $\vec{x}_0$.} One may choose other points $\vec{x}_0$ and use them to evaluate the first integrals $p_1(\vec{x}_0)$ and $p_2(\vec{x}_0)$, from which one can construct other invariant curves described by $p_1 = p_1(\vec{x}_0) \land p_2 = p_2(\vec{x}_0)$ (as in Fig.~\ref{subfig:algint2}).
\end{example}

\subsubsection{Darboux Polynomials}
\label{subsec:darboux}

Darboux polynomials were first introduced in 1878~\cite{Darboux} to study integrability of polynomial ODEs.
A polynomial $p \in \mathbb{R}[\vec{x}]$ is said to be a \emph{Darboux polynomial} for the system $\D{\vec{x}} = f(\vec{x})$ if and only if \mbox{$\D{p} = \alpha p$} for some polynomial $\alpha \in \mathbb{R}[\vec{x}]$, which is known as the \emph{cofactor} of $p$.
Like first integrals, discrete abstractions produced with Darboux polynomials result in three states with no transitions between them (as illustrated in \rref{fig:FI}, but with $k=0$).
Unlike first integrals, only $p=0$ is guaranteed to be an invariant
of the system.  Darboux polynomials have been used for predicate abstraction of continuous
systems by Zaki \emph{et al.}~\cite{DBLP:journals/jacic/ZakiDTB09}, who successfully
applied them to verify electrical circuit designs.

The problem of generating Darboux polynomials is generally far more difficult than that of generating polynomial first integrals (which represent the special case of Darboux polynomials where the cofactor $\alpha$ is $0$ in the equation $p'=\alpha p$). A modification of the method of undetermined coefficients described in the previous section can likewise be applied to search for Darboux polynomials. However, in order to apply this method, one is required to provide a polynomial template for \emph{both} the Darboux polynomial \emph{and} for its cofactor. Whenever one has a polynomial system of ODEs $\vec{x}'=f(\vec{x})$ in which the maximum polynomial degree of the components $f_1,f_2,\dots,f_n$ of $f$ is some $r\geq 0$, then the maximum possible degree of the Lie derivative (w.r.t. this system) of a polynomial $p$ of maximum degree $d$  is given by $d + r -1$. Consequently, to search for a Darboux polynomial of maximum degree $d$, the maximum degree of the cofactor $\alpha$ in the equation $p'=\alpha p$ that one needs to consider is given by $r-1$. To apply the method of undetermined coefficients, one requires a template $p_{\textcolor{red}{\vec{a}},d}$ for the Darboux polynomial and a separate template $\alpha_{\textcolor{red}{\vec{b}},r-1}$ for the cofactor. The equation to be solved is the following:
\[
	(p_{\textcolor{red}{\vec{a}},d})' - \alpha_{\textcolor{red}{\vec{b}},r-1} p_{\textcolor{red}{\vec{a}},d} =0 \enspace.
\]
By expanding the polynomial on the left-hand side and equating each of its monomial coefficients to $0$, one obtains a system of equations in the symbolic parameters $\textcolor{red}{\vec{a}},\textcolor{red}{\vec{b}}$; however, while this system is linear in the parameter variables $\textcolor{red}{\vec{a}}$ and $\textcolor{red}{\vec{b}}$ considered separately, it is a \emph{non-linear system of equations} in $\textcolor{red}{\vec{a}},\textcolor{red}{\vec{b}}$ simultaneously. In practice, solving such a non-linear system is far more computationally expensive than solving the linear systems for polynomial first integrals; the na\"ive method of undetermined coefficients does not provide a practically appealing solution for Darboux polynomial generation.

Fortunately, automatic generation of Darboux polynomials is an active area of research, owing largely to their importance as a crucial component in the \emph{Prelle-Singer method}~\cite{prelle1983} for computing elementary closed-form solutions to ODEs.
In order to implement the Prelle-Singer method, more sophisticated algorithms for Darboux polynomial generation have been developed in the computer algebra community, e.g.  two algorithms were reported by Man~\cite{man1993}. Indeed, in our experiments we have found the algorithms \texttt{ps\_1} and \texttt{new\_ps\_1} in~\cite{man1993} to be much more practical and implement them in Pegasus.

\begin{remark}
	We remark also that several algorithms for generating (what are essentially) Darboux polynomials have more recently been developed within the verification community~\cite{DBLP:conf/hybrid/KongBSJH17,DBLP:journals/tcs/RebihaMM15,DBLP:journals/fmsd/SankaranarayananSM08}. However, our experience with some of these procedures has been less positive.
The method in~\cite{DBLP:journals/tcs/RebihaMM15} was in practice found to be very inefficient and \emph{incomplete}, i.e. unable in general to find all the Darboux polynomials matching a given polynomial template;  the technique described in~\cite{DBLP:conf/hybrid/KongBSJH17} is significantly faster but is likewise incomplete.
\end{remark}

\begin{dangerous}
	Determining whether an arbitrary system of polynomial ODEs possesses a Darboux polynomial (and finding a bound on its degree if it does) remains an open problem~\cite[\S 4.1]{zhang2017}.
\end{dangerous}

\subsubsection{Rational First Integrals}
\label{subsec:ratFI}

Beyond polynomial functions, a much larger class of algebraic conserved quantities is that of \emph{rational first integrals}; these are first integrals represented by \emph{rational functions}, i.e. functions of the form $\frac{a}{b}$, where $a,b$ are polynomials and $b \neq 0$. Searching for this kind of first integral is (unsurprisingly) more difficult than is the case with polynomials; however, it is made possible by exploiting an idea from the seminal work of Darboux (see e.g. Schlomiuk~\cite{Schlomiuk1993}): multiple Darboux polynomials can be combined to construct a rational first integral.

\begin{theorem}
\label{thm:dbxthm}%
	Let $p_1,p_2, \dots, p_k$ be Darboux polynomials for the system \m{\vec{x}'=f(\vec{x})}, with \mbox{$p_i'=\alpha_ip_i$}, where $\alpha_i$ is some polynomial cofactor for each $i=1,\dots,k$. If
\begin{equation}
\lambda_1\alpha_1 + \lambda_2\alpha_2 + \dots + \lambda_k\alpha_k=0
\label{eq:dbxthm}
\end{equation}
 has a non-trivial integer solution, i.e. $\vec{\lambda}=(\lambda_1,\lambda_2,\dots,\lambda_k) \in \integers^k \setminus \{ \vec{0} \}$, then the system has a rational first integral $r_{\vec{\lambda}} \in \mathbb{R}(\vec{x})$ given by the product
 \[
r_{\vec{\lambda}} = p_1^{\lambda_1}p_2^{\lambda_2}\cdots p_k^{\lambda_k}.
 \]
\end{theorem}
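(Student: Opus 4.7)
The plan is to verify directly that the Lie derivative of $r_{\vec{\lambda}}$ with respect to the vector field $f$ vanishes, making $r_{\vec{\lambda}}$ a (rational) first integral of the system. The key algebraic identity is the logarithmic derivative: for a product of (possibly negative) powers, differentiation along $f$ distributes via the product and chain rules, and the Darboux property $p_i' = \alpha_i p_i$ makes each factor $\frac{p_i'}{p_i}$ reduce to the cofactor $\alpha_i$. The hypothesis~\rref{eq:dbxthm} then collapses the whole sum.

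Concretely, I would first observe that $r_{\vec{\lambda}} = p_1^{\lambda_1}\cdots p_k^{\lambda_k}$ is well-defined as an element of the rational function field $\mathbb{R}(\vec{x})$ on the open set $U = \{\vec{x}\in\mathbb{R}^n \mid p_i(\vec{x})\neq 0 \text{ for all } i \text{ with } \lambda_i \neq 0\}$. Since the Lie derivative is a derivation and extends naturally from $\mathbb{R}[\vec{x}]$ to $\mathbb{R}(\vec{x})$ via the quotient rule, applying the product and power rules on $U$ gives
\begin{equation*}
    r_{\vec{\lambda}}' \;=\; \sum_{i=1}^{k} \lambda_i \, p_i^{\lambda_i - 1} \, p_i' \prod_{j \neq i} p_j^{\lambda_j}\enspace.
\end{equation*}
Substituting the Darboux identity $p_i' = \alpha_i p_i$ into each summand yields
\begin{equation*}
    r_{\vec{\lambda}}' \;=\; \sum_{i=1}^{k} \lambda_i \, \alpha_i \, p_i^{\lambda_i} \prod_{j \neq i} p_j^{\lambda_j} \;=\; \Bigl(\sum_{i=1}^{k} \lambda_i \alpha_i\Bigr) \, r_{\vec{\lambda}}\enspace.
\end{equation*}
By the assumed integer relation~\rref{eq:dbxthm}, the coefficient $\sum_i \lambda_i \alpha_i$ is the zero polynomial, hence $r_{\vec{\lambda}}' = 0$ on $U$, which is precisely what it means for $r_{\vec{\lambda}}$ to be a first integral.

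There is no serious obstacle here; the theorem is essentially a computation once the correct derivation identity is invoked. The only minor subtlety worth flagging is that when some $\lambda_i$ are negative the expression $r_{\vec{\lambda}}$ is genuinely rational rather than polynomial, so strictly speaking one should remark that "first integral" is understood in the sense of~\cite[2.4.1]{Goriely2001} for functions defined on the open domain $U$, matching the caveat already stated in the footnote of~\rref{subsec:polyFI} that rational first integrals are conserved only where they are defined. A small bookkeeping point is also that the hypothesis requires $\vec{\lambda} \neq \vec{0}$ so that $r_{\vec{\lambda}}$ is a non-constant function and thus a nontrivial conserved quantity; otherwise the statement is vacuous.
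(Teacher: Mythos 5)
Your proof is correct and follows essentially the same route as the paper's: apply the product rule (Leibniz/logarithmic derivative) to $r_{\vec{\lambda}}$, substitute the Darboux identity $p_i' = \alpha_i p_i$ to pull out the common factor $r_{\vec{\lambda}}$, and invoke~\rref{eq:dbxthm} to conclude $r_{\vec{\lambda}}'=0$. The only difference is your additional (and reasonable) bookkeeping about the domain $U$ where $r_{\vec{\lambda}}$ is defined and the nontriviality remark — though note that $\vec{\lambda}\neq\vec 0$ alone does not actually guarantee $r_{\vec{\lambda}}$ is non-constant (e.g.\ if two $p_i$ coincide with opposite exponents), so that last side observation is a slight overclaim, but it does not affect the correctness of the theorem as stated.
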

\begin{proof}
By applying the product rule to compute the Lie derivative $r_{\vec{\lambda}}'$, we get
\begin{align*}
(p_1^{\lambda_1}p_2^{\lambda_2}\cdots p_k^{\lambda_k})' & =\lambda_1p_1^{\lambda_1-1}p_1'(p_2^{\lambda_2}\cdots p_k^{\lambda_k}) + \dots + \lambda_k p_k^{\lambda_k -1}p_k'(p_1^{\lambda_1}\cdots p_{k-1}^{\lambda_{k-1}})\\
&=\lambda_1p_1^{\lambda_1-1}\alpha_1 p_1(p_2^{\lambda_2}\cdots p_k^{\lambda_k}) + \dots + \lambda_k p_k^{\lambda_k -1}\alpha_k p_k(p_1^{\lambda_1}\cdots p_{k-1}^{\lambda_{k-1}})\\
& = (\lambda_1\alpha_1 + \lambda_2\alpha_2 + \dots + \lambda_k\alpha_k)(p_1^{\lambda_1}p_2^{\lambda_2}\cdots p_k^{\lambda_k}).
\end{align*}
From equation~\rref{eq:dbxthm} it follows that $r_{\vec{\lambda}}'=0$ and $r_{\vec{\lambda}}$ is therefore a first integral.
\qed
\end{proof}
\begin{remark}
Obviously, if the solution to~\rref{eq:dbxthm} is such that $\vec{\lambda} \in {\integers}_{\geq 0}^k$, then the first integral is polynomial; at least one negative component in $\vec{\lambda}$ is therefore required in order to construct a non-polynomial rational first integral.
	We also note that one may search for rational solutions to~\rref{eq:dbxthm}, i.e. $\vec{\lambda} \in \rationals^k$, which will in general result in first integrals featuring radicals. Any such first integral can be turned into a rational first integral by raising it to an integer power corresponding to the least common multiple of the denominators of the rational numbers $\lambda_1,\dots,\lambda_k$.
	In general, $\lambda_1,\dots,\lambda_k$ need not be rational or even real numbers in order for the construction given in \rref{thm:dbxthm} to work; however, irrational solutions lead to first integrals that are not rational functions.
\end{remark}

In light of the above theorem, a straightforward procedure for generating rational first integrals (which has previously been suggested by Man~\cite{Man1994}) involves \begin{enumerate*}[label=(\roman*)] \item generating Darboux polynomials $p_1,p_2\dots,p_k$ for the system $\vec{x}'=f(\vec{x})$, e.g. using an implementation of Man's algorithms~\cite{man1993}, and \item finding integer (or rational) solutions to the linear system of equations~\rref{eq:dbxthm} in~\rref{thm:dbxthm}.\end{enumerate*} If the coefficients of the cofactors $\alpha_1,\alpha_2,\dots,\alpha_k$ in equation~\rref{eq:dbxthm} are all rational numbers, the problem reduces to solving a system of linear Diophantine equations, for which there exist polynomial-time algorithms.
If a rational first integral $r_{\vec{\lambda}} = \frac{a}{b}$ is found, then $\frac{a}{b}=l$ defines an invariant hypersurface for any choice of $l\in \mathbb{R}$, assuming $b\neq 0$; rewriting this, we get that $a-lb=0$ is invariant for any $l\in\reals$ (when $b\neq 0$).

\begin{example}
	\begin{figure}[ht]
\includegraphics[width=\columnwidth]{./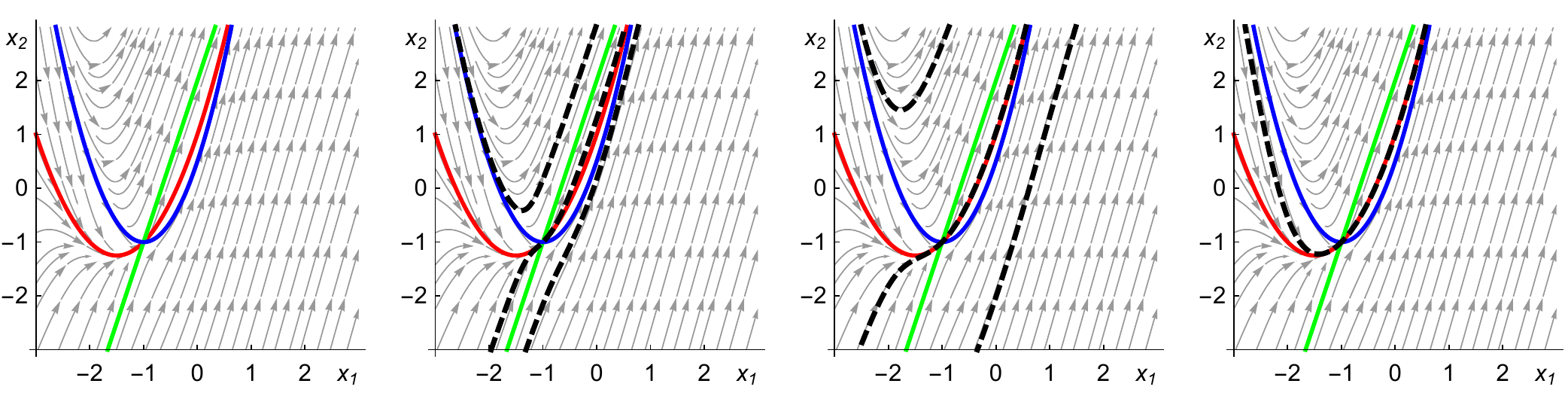}
\begin{subfigure}[t]{0.23\textwidth}
	\caption{$p_1,p_2,p_3 = 0$ \label{subfig:darboux}}
\end{subfigure}
\begin{subfigure}[t]{0.25\textwidth}
	\caption{$r_{\vec{\lambda}}=\frac{1}{10}$}
\end{subfigure}
\begin{subfigure}[t]{0.25\textwidth}
	\caption{$r_{\vec{\lambda}}=1$}
\end{subfigure}
\begin{subfigure}[t]{0.25\textwidth}
	\caption{$r_{\vec{\lambda}}=-2$}
\end{subfigure}
\caption{
	Rational first integral $r_{\vec{\lambda}}$ constructed from three
	Darboux polynomials. Zero sets of the three Darboux polynomials shown
	in solid green, blue and red. Invariant level sets of the rational
	first integral shown in dashed black for values
	$r_{\vec{\lambda}}=\frac{1}{10},1,-2$, respectively
	}
\label{fig:ratfi}
\end{figure}
	Consider the following non-linear system of ODEs~\cite{ferragut2010}:
\begin{align*}
x_1' &= 6 x_1^4+27 x_1^3-9 x_1^2 x_2+42 x_1^2-24 x_1 x_2+21 x_1+4 x_2^2-7 x_2+4\,, \\
x_2' &= 18 x_1^4+99 x_1^3-39 x_1^2 x_2+150 x_1^2+2 x_1 x_2^2-80 x_1 x_2+71 x_1+12 x_2^2-21 x_2+12\enspace.
\end{align*}

	Using our implementation of Man's algorithm~\cite{man1993}, we obtain the following list of Darboux polynomials in under one second of computation time:
\[(p_1,p_2,p_3) = \left(~x_1-\frac{x_2}{3}+\frac{2}{3},~x_1^2+2 x_1-\frac{2 x_2}{3}+\frac{1}{3},~x_1^2+3 x_1-x_2+1~\right)\enspace. \]

	Solving equation~\rref{eq:dbxthm} in~\rref{thm:dbxthm}, we obtain the solution $(\lambda_1,\lambda_2,\lambda_3)=(2,1,-1)$, from which we obtain the rational first integral (illustrated in~\rref{fig:ratfi})

		\[r_{\vec{\lambda}} = p_1^{2} p_2^{1} p_3^{-1} =  \frac{(x_1-\frac{x_2}{3}+\frac{2}{3})^2(x_1^2+2x_1-\frac{2 x_2}{3} + \frac{1}{3})}{x_1^2+3x_1-x_2+1}\enspace.\]

\end{example}

\begin{remark}
	Before attempting to search for algebraic first integrals (whether polynomials or rational functions) it is helpful to have static criteria that determine whether such first integrals can arise in a given system of ODEs. Criteria for non-existence of various kinds of first integrals have been studied by numerous authors (notably by Poincar\'e~\cite[\S 7.2]{zhang2017}) and typically make use of the linearization $\vec{x}'=A\vec{x}$ of the system $\vec{x}'=f(\vec{x})$ around a point of equilibrium (i.e. a point $\vec{x}_*$ where $f(\vec{x}_*)=\vec{0}$). In particular, a sufficient criterion for non-existence of rational first integrals in non-linear systems of ODEs  is given by Shi~\cite[Theorem 1]{shi2007nonexistence}; it requires that the eigenvalues $\lambda_1, \dots, \lambda_n$ of the matrix $A$ are such that $k_1\lambda_1 + \cdots + k_n\lambda_n =0$ does not have a non-trivial integer solution $(k_1,\dots,k_n) \in \mathbb{Z}^n \setminus \{\vec{0}\}$. A similar criterion, which furthermore accounts for repeated eigenvalues, is given by Goriely~\cite[Ch. 5, Prop. 5.5]{Goriely2001}.
\end{remark}

\paragraph{Combining Darboux Polynomials and Rational First Integrals.}
As a first hint of its flexibility for combining invariant generation methods, Pegasus implements rational first integral generation by combining several ideas described thus far in~\rref{sec:pegasus} as follows.
This flexibility is further exploited in the discussion of \emph{strategies} in~\rref{sec:saturation}.

\begin{enumerate}
	\item Compute a list of Darboux polynomials $p_1,\dots,p_k$  of some maximum polynomial degree $d$ using generation methods from~\rref{subsec:darboux}.
	\item Abstract the state space into sign invariant cells using those polynomials, e.g., $S_1 \equiv p_1<0 \land p_2=0$, $S_2 \equiv p_1<0 \land p_2>0$, $S_3 \equiv p_1 < 0 \land p_2 < 0$, etc., as described in~\rref{sec:ExactDiscreteAbstraction}. Notably, the resulting abstraction has no transitions between its discrete states, as illustrated in~\rref{fig:FI}.
  \item Prune away those invariant cells that do not intersect the initial set of states, e.g., delete $S_1$ if $\Init \cap S_1 = \emptyset$  since $S_1$ is then unreachable. Similarly, prune away cells that do not intersect the unsafe set, e.g., delete $S_2$ if $\Unsafe \cap S_2 = \emptyset$ because no initial states in $S_2$ can reach the unsafe set.
	\item \label{step4ratfi} The remaining unpruned \emph{conflict cells}, say $S_3$, define new invariant generation \emph{sub-problems}, where the original evolution domain constraint $Q$ is restricted to $Q \land S_3$.
Each of the Darboux polynomials are sign-invariant in these cells; moreover, those Darboux polynomials that are sign-definite (either strictly positive or negative) in each cell, e.g. $p_1,p_2$ with evolution domain constraint $p_1 < 0 \land p_2 > 0$ for $S_3$, can be used to compute rational first integrals $r_{\vec{\lambda}}$ (following~\rref{thm:dbxthm}). The denominator of $r_{\vec{\lambda}}$ is guaranteed to be a product of (powers of) sign-definite polynomials so these rational functions are always defined within each conflict cell.
  \item \label{step5ratfi} Using their respective rational first integrals $r_{\vec{\lambda}}$, refine each conflict cell by maximizing and minimizing the values of $r_{\vec{\lambda}}$ to obtain invariant sub-level sets $k_{\min}\leq r_{\vec{\lambda}} \land r_{\vec{\lambda}} \leq k_{\max}$ over the initial set (restricted to that cell), as described in~\rref{subsec:polyFI}.
  \item If conflict cells remain, increase the polynomial degree $d$ and go to step \emph{1.}
\end{enumerate}

\paragraph{Rational First Integrals of Linear Systems.} In the case of linear
systems of ODEs \(\vec{x}'=A\vec{x}\), more efficient methods exist that allow us
to \emph{directly construct} rational first integrals from the eigenvalues and
eigenvectors of the system matrix $A$. These explicit constructions are
described, e.g. in the work of Gorbuzov \& Pranevich~\cite{gorbuzov2012} and
Falconi \& Llibre~\cite{falconi2004}; in Pegasus, we implement and deploy the former techniques~\cite{gorbuzov2012}.

It is instructive to compare the results obtained by Lafferriere, Pappas and
Yovine~\cite{DBLP:journals/jsc/LafferrierePY01} (which state that
semi-algebraic reachable sets of linear ODEs \(\vec{x}'=A\vec{x}\) can be constructed from
semi-algebraic initial sets in cases when $A$ is
diagonalizable and all of its eigenvalues are rational) to essentially analogous results
independently obtained in the study of integrability of linear systems. For instance, \cite[Property 1.1]{gorbuzov2012} gives a sufficient condition for algebraic integrability which states that a linear system
\(\vec{x}'=A\vec{x}\) has a \emph{basis} of rational first integrals (i.e. is algebraically integrable)
if all the eigenvalues of $A$ are rational and of multiplicity $1$. Indeed, such a
basis of rational first integrals enables one to construct reachable sets
described by polynomials.

      ~ %

\subsubsection{Barrier Certificates}
The method of \emph{barrier certificates} is a popular Lyapunov-like technique for safety verification of continuous and hybrid systems~\cite{DBLP:conf/hybrid/PrajnaJ04}.
Barrier certificates are differentiable functions $p$ that define an invariant region $p \leq 0$ which separates the initial states (wholly contained within \mbox{$p \leq 0$}) from the unsafe states (wholly contained within \mbox{$p > 0$}). In order to ensure continuous invariance of the region defined by $p \leq 0$, the Lie derivative $p'$ of the barrier certificate needs to satisfy certain criteria; differences in these criteria give rise to a number of variations of barrier certificates in the literature. The original work by Prajna and Jadbabaie~\cite{DBLP:conf/hybrid/PrajnaJ04} introduced \emph{convex barrier certificates}, which employ the differential inequality $p' \leq 0$ to guarantee invariance of $p \leq 0$ under the flow of the system. Later work by Kong et al.~\cite{DBLP:conf/cav/KongHSHG13} introduced so-called \emph{exponential-type barrier certificates}, which provide a generalization employing the differential inequality $p' \leq \lambda p$, where $\lambda \in \mathbb{R}$; this was generalized further yet in the work of Dai et al.~\cite{DBLP:journals/jsc/DaiGXZ17}, who introduced \emph{general barrier certificates} employing the differential inequality $p' \leq \omega(p)$, where $\omega$ is a specifically crafted scalar \emph{function} to guarantee invariance of $p \leq 0$. All of the above developments are fundamentally based on the classical notion of \emph{comparison systems}~\cite[Ch II, \S 3, Ch. IX]{Rouche1977} in the theory of ODEs. A unified understanding of these generalizations is described in prior work~\cite{DBLP:conf/fm/SogokonGTP18}, which introduces a further generalization of the barrier certificate framework: \emph{vector barrier certificates}, employing multidimensional comparison systems in a way analogous to vector Lyapunov functions introduced by Bellman~\cite{bellman1962vector}.

Barrier certificates are practically interesting because one may apply the method of undetermined coefficients to automatically search for them using tractable techniques: either sum-of-squares programming (SOS)~\cite{DBLP:conf/hybrid/PrajnaJ04} or linear programming (LP)~\cite{DBLP:conf/fm/YangHCL016}.
Pegasus is able to search for convex~\cite{DBLP:conf/hybrid/PrajnaJ04}, exponential-type~\cite{DBLP:conf/cav/KongHSHG13}, and vector barrier certificates~\cite{DBLP:conf/fm/SogokonGTP18} using both SOS and LP techniques.
However, the resulting barrier certificates often suffer from numerical inaccuracies arising from the use of semidefinite solvers and interior point methods~\cite{DBLP:journals/fmsd/RouxVS18}. Pegasus currently uses a simple rounding heuristic on the numerical result and explicitly checks invariance for the resulting (exact) barrier certificate candidates using real quantifier elimination.
An example of a barrier certificate generation technique implemented in Pegasus, and an illustration of its numerical issues is given next.

\begin{example}
Consider the safety verification problem illustrated in~\rref{fig:barriers} (left).
The task is to generate an invariant showing that ODE solutions starting within the initial set $\Init$ (in green) do not enter the unsafe set $\Unsafe$ (in red).
A candidate continuous invariant $p \leq 0$ (shown in blue in~\rref{fig:barriers}, left) is found using numerical barrier certificate generation techniques.
\begin{figure}[htp]
\centering
\includegraphics[width=0.65\columnwidth]{./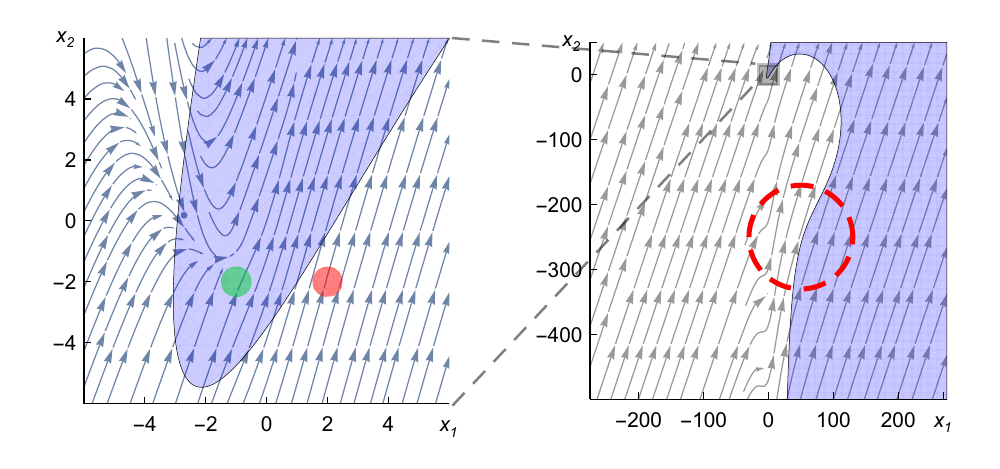}
\caption{(Left) A candidate invariant generated using numerical barrier certificates (in blue) for the safety verification problem of showing that solutions from the green initial state never reach the red unsafe states.
(Right) A zoomed out view of the safety verification problem, showing that the candidate invariant is, in fact, \emph{not} an invariant of the ODE because some states can exit the invariant (highlighted with a dashed red circle).}
\label{fig:barriers}
\end{figure}

The (exponential-type) barrier certificate $p$ is generated from a polynomial template $p_{\textcolor{red}{\vec{a}},d}$ of degree $d$ over variables $x,y$, by solving (and then substituting) for appropriate concrete values of the template coefficients $\textcolor{red}{\vec{a}}$.
For clarity below, the notation $p_{\textcolor{red}{\vec{a}},d}$ is used in steps where the generation algorithm produces constraints on the coefficients $\textcolor{red}{\vec{a}}$, while $p$ always refers to the final, generated barrier certificate.
Logically, it suffices to find real values for $\textcolor{red}{\vec{a}}$ so that the following formulas are simultaneously valid:
\begin{align}
& \Init \limply p_{\textcolor{red}{\vec{a}},d} \leq 0 \label{eq:barrier1} \,,\\
& \Unsafe \limply p_{\textcolor{red}{\vec{a}},d} > 0 \label{eq:barrier2} \,,\\
	& (p_{\textcolor{red}{\vec{a}},d})' \leq \lambda p_{\textcolor{red}{\vec{a}},d} \label{eq:barrier3}\enspace.
\end{align}

Constraints~\rref{eq:barrier1} and~\rref{eq:barrier2} ensure that the generated barrier separates the initial set from the unsafe set, e.g., in~\rref{fig:barriers} (left) the green initial region is wholly contained in the blue candidate invariant region $p \leq 0$, while the red unsafe region lies entirely outside.
Constraint~\rref{eq:barrier3} ensures that the sub-level set $p\leq 0$ is a continuous invariant, intuitively, the vector field points ``inwards'' along the boundary of $p \leq 0$ (blue region in~\rref{fig:barriers}), so it is impossible to flow from within $p \leq 0$ to $p > 0$.
A more general version of these constraints, and a soundness proof, is available elsewhere~\cite{DBLP:conf/cav/KongHSHG13}.

\end{example}

Sum-of-squares (SOS) programming~\cite{DBLP:journals/corr/PapachristodoulouAVPSP13} provides a tractable way of solving for the coefficients $\textcolor{red}{\vec{a}}$.
Suppose that $\Init,\Unsafe$ are described with polynomial inequalities $\Init \equiv \bigwedge_{i=1}^a I_i \geq 0$, $\Unsafe \equiv \bigwedge_{i=1}^b U_i \geq 0$.
Inequalities~\rref{eq:barrier1}--\rref{eq:barrier3} are respectively implied by the following SOS inequalities, where $\varepsilon > 0$ is a small positive constant and $\sigma_{I_i}$, $\sigma_{U_i}$ are template SOS polynomials~\cite{DBLP:journals/corr/PapachristodoulouAVPSP13}:
\begin{align}
&-p_{\textcolor{red}{\vec{a}},d} - \sum_{i=1}^a \sigma_{I_i} I_i \geq 0 \label{eq:barrier4} \,,\\
& p_{\textcolor{red}{\vec{a}},d} - \sum_{i=1}^b \sigma_{U_i} U_i - \varepsilon \geq 0  \label{eq:barrier5}\,,\\
	& \lambda p_{\textcolor{red}{\vec{a}},d} - (p_{\textcolor{red}{\vec{a}},d})' \geq 0 \label{eq:barrier6} \enspace.
\end{align}

Sum-of-squares solvers, such as SOSTOOLS~\cite{DBLP:journals/corr/PapachristodoulouAVPSP13}, witness the inequalities~\rref{eq:barrier4}--\rref{eq:barrier6} by finding an SOS representation for their left-hand side.
For example, a set of polynomials $g_1,\dots,g_n$ satisfying the polynomial identity $-p_{\textcolor{red}{\vec{a}},d} - \sum_{i=1}^a \sigma_{I_i} I_i  = \sum_{i=1}^n g_i^2$ proves~\rref{eq:barrier4} because the RHS of this inequality is a sum-of-squares, which is non-negative.
These polynomial identities are found efficiently by semidefinite programming~\cite{ParriloPhD}, which is also where \emph{numerical} solvers are used.
In practice, Pegasus loops through a range of values for the parameters $d, \lambda,\varepsilon$ as well as the degrees of the SOS polynomials $\sigma_{I_i}$, $\sigma_{U_i}$ and attempts to solve these constraints for each concrete choice of parameters.

While efficient, the use of numerical solvers has its drawbacks, e.g. because the generated coefficients $\textcolor{red}{\vec{a}}$ need not truly satisfy all the required constraints.
This is why Pegasus (and \KeYmaeraX) treats the generated barrier certificate $p$ only as a \emph{candidate} invariant and performs additional arithmetical checks to ensure that the constraints are truly met.
As a cautionary example,~\rref{fig:barriers} (left) rather misleadingly suggests that $p \leq 0$ is an invariant within its small plot domain.
Indeed, \rref{fig:barriers} (right) is a zoomed out version of the same plot which shows that the constraint~\rref{eq:barrier3} fails to hold for larger values of $x,y$.

Linear programming (LP) was employed as an alternative to sum-of-squares programming by Sankaranarayanan et al.~\cite{DBLP:conf/nolcos/Sankaranarayanan0A13} to generate Lyapunov functions, and later applied by Yang et al.~\cite{DBLP:conf/fm/YangHCL016} to similarly generate barrier certificates. The main idea behind this approach is to employ a \emph{linear relaxation}, whereby non-negativity of a polynomial $p$ is witnessed, subject to non-negativity of (basis) polynomials $p_1,p_2,\dots,p_k$, i.e.\ \mbox{$p_1\geq 0 \land p_2\geq 0 \land \dots \land p_k\geq 0 \limply p \geq 0$} is reduced to the existence of non-negative Lagrangian multipliers $\lambda_1,\lambda_2, \dots, \lambda_k$ such that $\lambda_1 p_1 + \lambda_2 p_2 + \cdots + \lambda_k p_k = p$.

In cases when the evolution domain constraint $Q$ is described by a conjunction of polynomial inequalities $Q \equiv q_1 \geq 0 \land \dots \land q_l \geq 0$ (e.g. in the case of hyperboxes or polyhedra), one may form all products $p_i=q_1^{\alpha_{1i}}\cdots q_l^{\alpha_{li}}$ up to some maximum total degree and use them to solve the linear relaxation for $p_1 \geq 0 \land \dots \land p_k \geq 0 \limply p_{\textcolor{red}{\vec{a}},d}\geq 0$ using linear programming, obtaining a polynomial which is non-negative on $Q$. The conditions for barrier certificates are encoded in an obvious way.

\begin{dangerous}
	In using convex optimization methods to search for barrier certificates, one is not concerned with optimizing the value of any particular objective function (the zero function suffices); one is rather interested in finding a feasible solution to a set of constraints.
  For LP, it is possible to use an SMT solver which supports the theory of linear real arithmetic (\texttt{LRA}, e.g., as supported by Z3) to search for \emph{models} of formulas describing the constraints to obtain instantiations of the parameter variables in the template; however, in our experience, implementations of linear programming solvers (especially employing interior point algorithms) in Mathematica and MATLAB offer considerably better performance compared to Z3 (which implements the Dual Simplex algorithm~\cite{DBLP:conf/cav/DutertreM06}).
\end{dangerous}

\section{Strategies for Invariant Generation}%
\label{sec:saturation}
The implementation of primitive invariant generation methods from~\rref{sec:pegasus} in a single framework is a significant undertaking in itself.
The overall goal behind Pegasus, however, is to enable these heterogeneous methods to be effectively deployed and fruitfully combined into \emph{strategies} for invariant generation that are tailored to specific classes of verification problems.
Different invariant generation strategies are invoked in Pegasus, depending on the classification of the input problem it receives from the problem \emph{classifier}.
In this section, and for the evaluation, we focus on the most challenging and general class of \emph{non-linear} systems in which no further structure is known or assumed beyond the fact that the right-hand sides of the ODEs are polynomials.

\subsection{Differential Saturation}
\label{subsec:diffsaturation}

The main invariant generation strategy Pegasus uses for general non-linear systems is based on a \emph{differential saturation} procedure~\cite{DBLP:journals/fmsd/PlatzerC09}.
Briefly, the procedure loops through a prescribed \emph{sequence} of invariant generation methods and \emph{successively} attempts to strengthen the evolution domain constraint using invariants found by those methods until the desired safety condition is proved.%
\footnote{Pegasus partitions problems into subsystems according to variable dependencies in their differential equations~\cite{DBLP:journals/fmsd/PlatzerC09}.
For $x_1'=x_1, x_2'= x_1 + x_2$, for example, Pegasus first searches for invariants involving only $x_1$, before searching for those involving both $x_1$ and $x_2$.}
Notably, this loop allows Pegasus to exploit the strengths of different invariant generation methods, even if it is \emph{a priori} unclear whether one is better than the other.
The precise sequencing of invariant generation methods is also important in this strategy to avoid redundancy.
Pegasus orders the methods by computational efficiency, e.g. it first searches for first integrals, followed by Darboux polynomials and barrier certificates.
This sequencing allows slower methods to exploit invariants that are quickly generated by earlier methods.

\begin{example}
\label{ex:diffsat}
\begingroup

The synergy between individual methods exploited by differential saturation is illustrated in \rref{fig:diffsat} for an example from our benchmarks.
\begin{figure}[t]
\includegraphics[width=\columnwidth]{./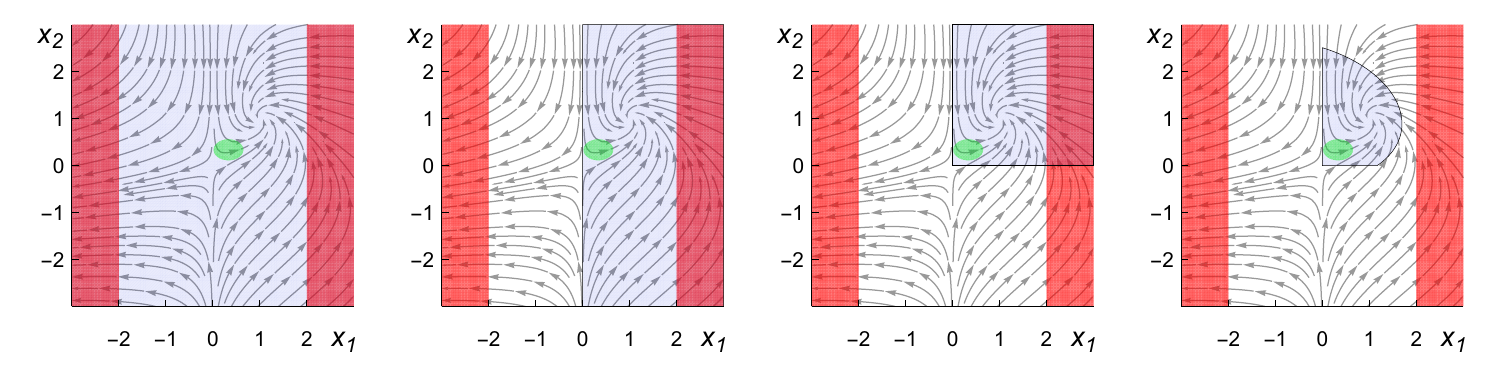}
\begin{subfigure}[t]{0.25\textwidth}
	\caption{No constraint\label{diffsatA}}
\end{subfigure}
\begin{subfigure}[t]{0.25\textwidth}
	\caption{$x_1 > 0$\label{diffsatB}}
\end{subfigure}
\begin{subfigure}[t]{0.25\textwidth}
	\caption{$x_1 > 0 \land x_2 > 0$\label{diffsatC}}
\end{subfigure}
\begin{subfigure}[t]{0.23\textwidth}
	\caption{$x_1 > 0 \land x_2 > 0 \land p < 0$\label{diffsatD}}
\end{subfigure}
\caption{Invariant synthesis using the differential saturation loop in Pegasus. The domain under consideration at each step is shaded in blue and annotated below each plot, with the polynomial $p = \frac{3}{8}x_1 + \frac{23}{56} x_1^2 - \frac{123}{56} x_2 + \frac{3}{14}x_1x_2 + \frac{29}{28} x_2^2 - 1$}
\label{fig:diffsat}
\end{figure}

	Initially (leftmost plot Fig.~\ref{diffsatA}), the entire plane (in blue) is under consideration and Pegasus wants to show the safety property that trajectories from the initial states (in green) never reach the unsafe states (in red).
	In the second plot (Fig.~\ref{diffsatB}), Pegasus confines its search to the region $x_1>0$ using the generated Darboux polynomial $x_1$.
	In the third plot (Fig.~\ref{diffsatC}), using $x_1>0$, qualitative analysis finds the invariant $x_2 > 0$ (whose invariance depends on $x_1>0$) which further confines the evolution domain constraint.
	Finally (rightmost plot Fig.~\ref{diffsatD}), Pegasus finds a barrier certificate (of polynomial degree 2) that suffices to show the
safety property within the strengthened evolution domain constraint (which, by construction, is invariant).
The final invariant region contains several sharp corners and thus \emph{cannot} be directly obtained as the sub-level set of a single polynomial barrier certificate.
Instead, it incorporates a conjunction of invariants discovered earlier by other means.
\endgroup
\end{example}

\begin{remark}
Pegasus extracts \emph{proof hints} from the internal reasoning sequence used in its differential saturation strategy, e.g., it tracks the order of construction of the invariants $x_1 > 0, x_2 > 0, \dots$ from~\rref{ex:diffsat} and how they were individually proved.
These hints are useful for deductive tools like \KeYmaeraX because they can be used to guide its proofs for the generated invariants in a corresponding, step-by-step manner, with the most appropriate verification technique for the invariant used at each step.
\end{remark}

Given an input safety verification problem, it is \emph{a priori} unknown which of the invariant generation methods used for differential saturation would succeed; and even for those that do succeed, it is difficult to predict the precise duration required.
The overall strategy in Pegasus imposes carefully balanced timeouts, where each method called by differential saturation attempts to:
\begin{itemize}
\item detect their applicability efficiently to conserve time budgets for other methods if they are not applicable,
\item keep track of intermediate results and report partial results (if applicable) when their individual timeouts are hit,
\item efficiently check when they are done.
\end{itemize}

Pegasus uses configuration parameters to adjust timeouts and method behavior, e.g., maximum degree of barrier certificate templates.
In addition, Pegasus supports configuration of the overall strategy behavior in terms of combining method results, how it handles method timeouts, and how it detects when the methods succeeded.
In the current implementation, and in~\rref{sec:evaluation}, we explore the following strategy configuration options:
\begin{enumerate}
\renewcommand{\labelenumi}{\textbf{\theenumi}}
\renewcommand{\theenumi}{\textbf{C\arabic{enumi}}}
\item \label{itm:c1} Auto-Reduction: whether or not to filter redundant invariants when combining results
\item \label{itm:c2} Heuristic Search: whether or not to apply qualitative analysis and other heuristic search methods
\item \label{itm:c3} Budget Redistribution: strict method timeouts or redistribution of unused time budget to later methods
\item \label{itm:c4} Subsystem Splitting: whether or not to analyze subsystems separately
\end{enumerate}

Option \ref{itm:c1} allows Pegasus to find invariants of lower descriptive complexity, which may be more insightful for users and easier to prove in \KeYmaeraX.
Options \ref{itm:c2}--\ref{itm:c4} allow expert users finer control over how Pegasus searches for invariants.
For example, \ref{itm:c4} is useful when the input problem is known to consist of many subsystems of ODEs~\cite{DBLP:journals/fmsd/PlatzerC09} that can be tackled separately.
The trade-off between these options is qualitatively evaluated in~\rref{sec:evaluation}.

\subsection{Differential Divide-and-Conquer}

The differential saturation strategy uses a melting pot of primitive invariant generation methods without (directly) adding more logical or mathematical considerations.
The \emph{differential divide-and-conquer} (DDC) proof rule~\cite{DBLP:conf/vmcai/SogokonGJP16} is an example logical technique that also fits well into the Pegasus framework.

Briefly, the rule says that if $p=0$ is an invariant for both the forwards ODE $\D{\vec{x}} = f(\vec{x})$ \emph{and} the backwards ODE $ \D{\vec{x}} = -f(\vec{x})$, then the state space partitions into three invariant subspaces $p < 0$, $p=0$, $p > 0$, and it suffices to consider the invariant generation sub-problems (restricted to each subspace) separately.
All Darboux polynomials $p$ (\rref{subsec:darboux}) meet the forwards-and-backwards invariance criteria and can be used to partition the state space.
Indeed, this DDC strategy is already implicitly used in the invariant generation method for rational first integrals in~\rref{subsec:ratFI}, which partitions the state space using Darboux polynomials, and then generates rational first integrals on the resulting sub-problems.
Pegasus generalizes this by looking for invariants on each sub-problem instead, i.e., by replacing steps~\ref{step4ratfi} and~\ref{step5ratfi} from the method described in~\rref{subsec:darboux} as follows:

\begin{enumerate}
	\item[4*] For each unpruned \emph{conflict cell} $S$, define a new invariant generation \emph{sub-problem}, with the original evolution domain constraint $Q$ restricted to $Q \land S$.
  \item[5*] Call the differential saturation strategy (\rref{subsec:diffsaturation}) to find an invariant on all newly generated sub-problems.
\end{enumerate}

\begin{example}
\begingroup

The differential divide-and-conquer strategy is illustrated in \rref{fig:diffdc} for a tweaked~\rref{ex:diffsat} with larger initial set and smaller unsafe set.
\begin{figure}
\includegraphics[width=\columnwidth]{./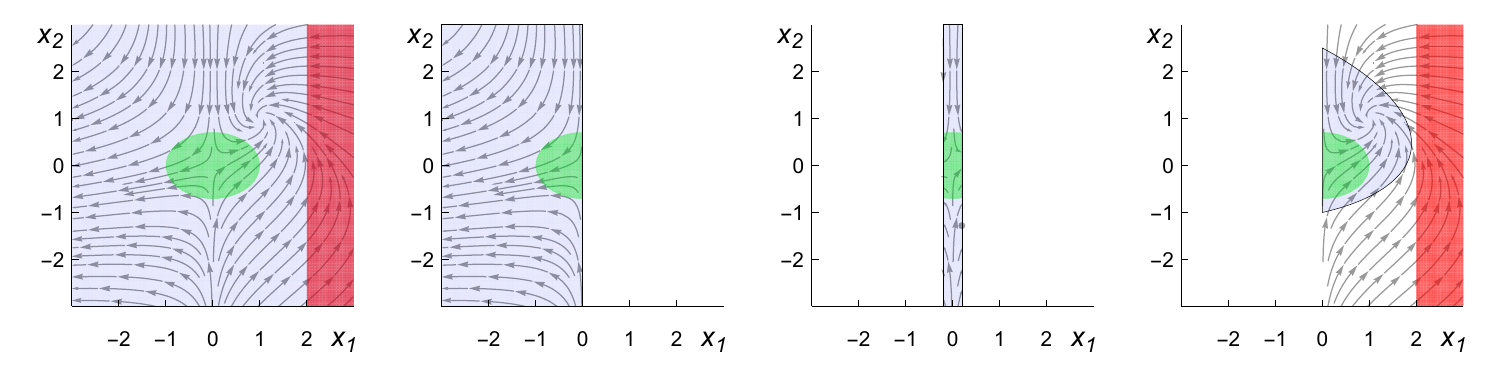}
\begin{subfigure}[t]{0.25\textwidth}
	\caption{No constraint\label{ddcA}}
\end{subfigure}
\begin{subfigure}[t]{0.25\textwidth}
	\caption{$x_1 < 0$\label{ddcB}}
\end{subfigure}
\begin{subfigure}[t]{0.25\textwidth}
	\caption{$x_1 = 0$ (enlarged)\label{ddcC}}
\end{subfigure}
\begin{subfigure}[t]{0.23\textwidth}
	\caption{$x_1 > 0 \land p < 0$\label{ddcD}}
\end{subfigure}
\caption{Invariant synthesis using differential divide-and-conquer in Pegasus.  The domain under consideration at each step is shaded in blue and annotated below each plot, with the polynomial $p = \frac{11}{25}x_1 + \frac{7}{100} x_1^2 - \frac{3}{5} x_2 + \frac{3}{25}x_1x_2 + \frac{2}{5} x_2^2 - 1$}
\label{fig:diffdc}
\end{figure}

	As before, initially (leftmost plot Fig~\ref{ddcA}), the entire plane (in blue) is under consideration and Pegasus wants to show the safety property that trajectories from the initial states (in green) never reach the unsafe states (in red).
	Pegasus partitions the problem into three sub-problems, shown in the subsequent plots, using the Darboux polynomial $x_1$; in those plots, only the part of the plane relevant to each sub-problem is drawn.In the third plot (Fig.~\ref{ddcC}, the evolution domain constraint $x_1=0$ is slightly (but soundly) enlarged to $-0.2\leq x_1 \leq 0.2$ for visibility in the illustration as it would otherwise be an infinitesimal line.
	In the second (evolution domain constraint $x_1<0$, Fig.~\ref{ddcB}) and third (evolution domain constraint $x_1=0$, enlarged in Fig.~\ref{ddcC}) plots, the sub-problems are proved trivially because they contain no unsafe states.
	In the rightmost plot (Fig.~\ref{ddcD}, evolution domain constraint $x_1 > 0$), Pegasus finds a barrier certificate (in blue) that solves the sub-problem.
\endgroup
\end{example}

\section{Evaluation}
\label{sec:evaluation}

This section presents a qualitative evaluation of the invariant generation capabilities of Pegasus and its interaction with the ODE proving tactics of \KeYmaeraX.
The insights obtained from these benchmarks provide useful default configuration options for Pegasus, e.g., those described in~\rref{sec:saturation}.

\subsection{Benchmark Suite}

The benchmark suite consists of 150 continuous safety verification problems, with 90 earlier problems~\cite{DBLP:conf/fm/SogokonMTCP19} and 60 new ones, all drawn from the literature~\cite{DBLP:conf/icalp/AlmagorKO020,DBLP:conf/cdc/SassiGS14,DBLP:journals/jsc/DaiGXZ17,DBLP:journals/automatica/DjaballahCKB17,ferragut2010,DBLP:conf/tacas/GhorbalP14,gorbuzov2012,DBLP:conf/cav/GulwaniT08,DBLP:conf/adhs/ImmlerA0FFKLMTZ18,DBLP:conf/hybrid/KapinskiDSA14,DBLP:conf/hybrid/KongBSJH17,DBLP:conf/emsoft/LiuZZ11,jaume2002,papachristodoulou2002,DBLP:conf/hybrid/Rodriguez-CarbonellT05,DBLP:conf/hybrid/Sankaranarayanan10,DBLP:conf/cpsweek/SogokonGJ16,DBLP:conf/fm/SogokonGTP18,DBLP:conf/fm/YangHCL016,yang2020,DBLP:journals/jacic/ZakiDTB09}.
Some are drawn from papers that present and discuss properties of a system of ODEs without explicitly providing initial and safe conditions; in such cases, we design our own initial and safe sets based on the provided discussion.

The suite consists of problems involving linear, affine, multi-affine, or (non-linear) polynomial ODEs over a range of dimensions: 71 two-dimensional systems, 30 three-dimensional systems, 35 higher-dimensional (${\geq} 4$, ${\leq}16$) systems, and 14 \emph{product systems} that were formed by randomly combining pairs of two- and three-dimensional systems, see~\rref{fig:classification} (a), (b).
The problems have a range of topological and logical structures to test the applicability of various invariant generation methods.
A summary of the topological structure of the problems is shown in~\rref{fig:classification} (c); the sets involved are either topologically bounded or unbounded (or None, when there is no evolution domain constraint), and either topologically open or closed (or neither).
A summary of the logical structure of the problems is shown in~\rref{fig:classification} (d); the formulas involved are either described algebraically by an equation, or by an atomic inequality, or, more generally, by a semi-algebraic formula involving conjunctions and disjunctions of equations and inequalities.
The experiment was run on commodity hardware.\footnote{\label{hardware} MacBook Pro 2019 with 2.6GHz Intel Core i7 (model 9750H) and 32GB memory (2667MHz DDR4 SDRAM), Mathematica 12.1 and MATLAB 2019b with SOSTOOLS 3.03.}

\begin{figure}[h]
\begin{subfigure}[t]{0.45\textwidth}
\begin{tikzpicture}
\pie[sum=auto,radius=1.5,after number=]{71/2-dim.,30/3-dim.,35/Higher-dim.,14/Product};
\end{tikzpicture}
\caption{Differential Equation Dimension}
\end{subfigure}\quad
\begin{subfigure}[t]{0.45\textwidth}
\begin{tikzpicture}
\pie[sum=auto,radius=1.5,after number=,text=label,/tikz/every label/.style={align=center}]{17/Linear,6/Affine,26/Multi-affine,5/Homogeneous,96/Polynomial};
\end{tikzpicture}
\caption{Differential Equation Class}
\end{subfigure}\\~\\

\begin{subfigure}[b]{\textwidth}
\centering
\begin{tabularx}{\columnwidth}{@{}X@{\hspace{-1em}}rrrrrrr@{}}
\toprule
                        & \multicolumn{3}{c}{Bounded}                                                         & \multicolumn{3}{c}{Unbounded}                                                       & \multicolumn{1}{l}{}    \\
\cmidrule(lr){2-4}\cmidrule(lr){5-7}
Topology                & \multicolumn{1}{l}{Open} & \multicolumn{1}{l}{Closed} & \multicolumn{1}{l}{Neither} & \multicolumn{1}{l}{Open} & \multicolumn{1}{l}{Closed} & \multicolumn{1}{l}{Neither} & \multicolumn{1}{l}{None} \\ \midrule
Initial Set (Pre.)      & 15                       & 76                         & 13                           & 20                       & 16                         & 10                           & -                         \\
Unsafe Set (Neg. Post.) & 1                        & 49                         & 2                            & 26                       & 57                         & 15                           & -                         \\
Evolution Domain        & 0                        & 26                         & 0                            & 3                        & 10                         & 0                            & 111                       \\ \bottomrule
\end{tabularx}
\caption{Problem topology}
\end{subfigure}\\~\\~\\
\begin{subfigure}[b]{\textwidth}
\centering
\begin{tabular}{@{}lrrrr@{}}
\toprule
Logical Structure   &  \multicolumn{1}{l}{Algebraic} &  \multicolumn{1}{l}{Atomic Inequality} &  \multicolumn{1}{l}{Semi-algebraic} & \multicolumn{1}{l}{None} \\ \midrule
Precondition        & 29        & 44                & 77                                                             & -                        \\
Postcondition       & 5         & 74                & 71                                                             & -                        \\
Evolution Domain    & 1         & 5                 & 33                                                             & 111                      \\ \bottomrule
\end{tabular}
\caption{Problem logical structure}
\end{subfigure}

\vspace{1ex}
\caption{Benchmark suite classification among 150 benchmarks}
\label{fig:classification}
\end{figure}
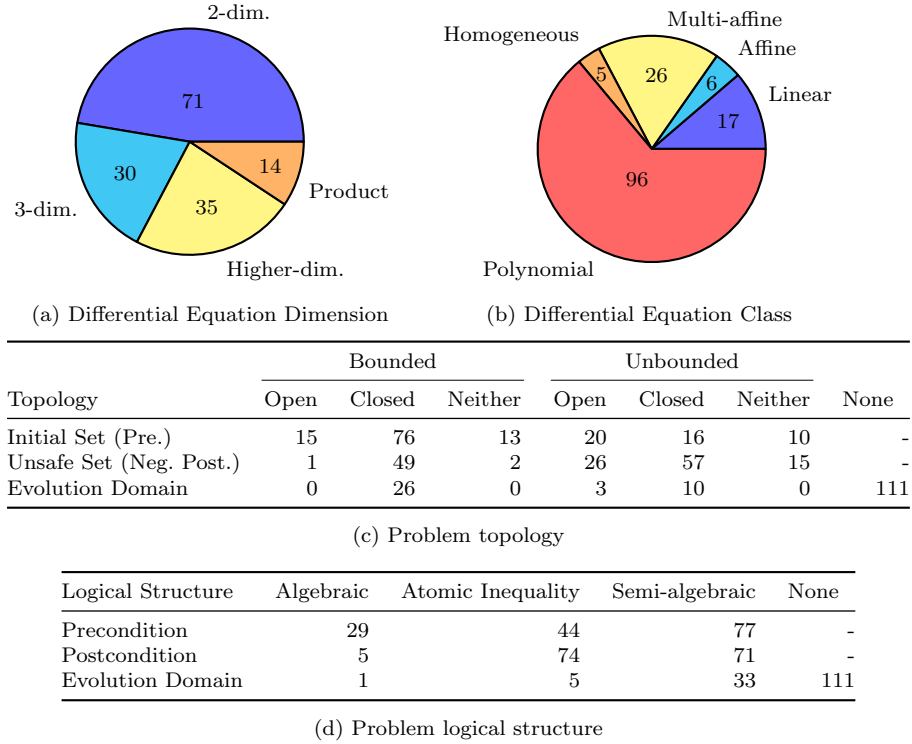

\subsection{Differential Saturation Performance}

We analyze the differential saturation strategy compared to each invariant generation method in isolation, measuring the duration of invariant generation, duration of checking the generated invariants, and the total proof duration.
We analyze the effect of exposing proof hints with the generated invariants, and the effect of strategy configuration options \ref{itm:c1}--\ref{itm:c4} from~\rref{sec:saturation}.

\subsubsection{Differential Saturation versus Individual Generation Methods}
The results comparing differential saturation against individual methods for each benchmark problem are shown in \rref{fig:strategycomparison}.
Several experimental insights can be drawn from these results:
\begin{enumerate*}[label=(\roman*)]
\item different invariant generation methods generally solve different subsets of the problems,
\item invariant generation almost always dominates total proof duration although invariant checking becomes more expensive as problem dimension increases,
\item when multiple methods solve a problem, qualitative analysis and first integrals are often quickest, followed by Darboux polynomials and then barrier certificates,
\item the differential saturation strategy effectively combines invariant generation methods; it solves $16$ additional problems (of which $7$ are product systems) that no individual method solves by itself.
Differential saturation is especially effective on product systems because each part of the product may be only solvable using a specific method.
\item Finally, the performance of Pegasus (with default configuration) has remained relatively stable compared to its earlier version~\cite{DBLP:conf/fm/SogokonMTCP19}.
\end{enumerate*}

\begin{figure}[h]
\centering
\begin{subfigure}[b]{\columnwidth}
\includegraphics[width=\columnwidth]{./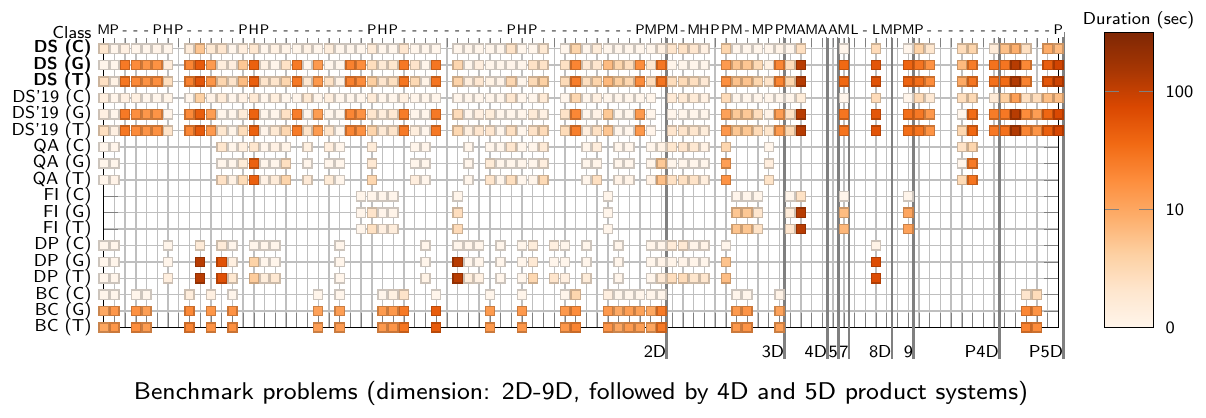}
\caption{90 benchmark problems from FM 2019 conference version \cite{DBLP:conf/fm/SogokonMTCP19}\vspace{\baselineskip}}
\end{subfigure}
\begin{subfigure}[t]{.2\columnwidth}
\begin{tikzpicture}
\end{tikzpicture}
\end{subfigure}
\begin{subfigure}[t]{.8\columnwidth}
\includegraphics[width=\columnwidth]{./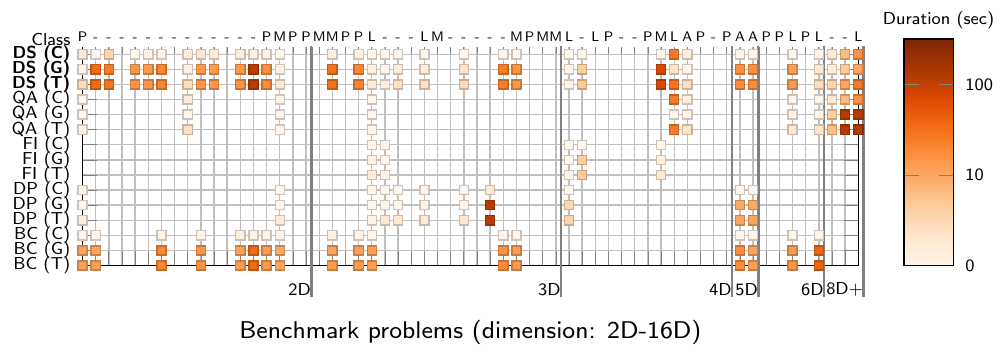}
\caption{60 additional benchmark problems\vspace{\baselineskip}}
\end{subfigure}
\caption{Comparison of invariant generation methods. Each column represents one benchmark problem and the color encodes duration (lighter is faster). Empty columns are unsolved. Legend: the combined Differential Saturation (DS) strategy against Qualitative Analysis (QA), First Integrals (FI), Darboux Polynomials (DP), and Barrier Certificates (BC), on total proof duration (T), generation duration (G), and checking duration (C).
Results for the earlier implementation~\cite{DBLP:conf/fm/SogokonMTCP19} (with new hardware, see Footnote~\ref{hardware}) are also shown for comparison (DS'19).
The ODE classification for each problem is annotated at the top: homogeneous polynomial (H), polynomial (P), linear (L), affine (A), multi-affine (M), dashes indicate same class as the enclosing labels.}
\label{fig:strategycomparison}
\end{figure}

\begin{figure}[H]
{\setlength{\belowcaptionskip}{0pt}
{\captionsetup[subfigure]{oneside,margin={0.8cm,0cm}}
\begin{subfigure}[b]{.36\textwidth}
\begin{tikzpicture}
\begin{axis}[
        font=\scriptsize\sffamily,
scale only axis,
width=0.75\columnwidth,
    ylabel={\quad Cumulative time (sec)},
xlabel={Problems},
ylabel near ticks,
xlabel style={at={(0,0.12)},anchor=east},
xtick={0,25,50,75,100},
ytick={0.1,1,10,100,1000,10000},
yticklabels={0.1,1,10,100,$10^3$},
ymin=0.1,
ymax=5000,
grid=major,
ymode=log,
legend columns=5,
legend entries={Diff. Sat., Barrier, Darboux, First Integrals, Qualitative,Full (no hints)},
legend style={at={(0.2,1.1)},anchor=south west}
]
    \addplot+[mark=none, thick, color=lsgreen,y domain=0:1000]                    table [col sep=comma,skip first n=2,y expr={\thisrowno{13}/1000},x index=12] {benchmarks_fmsd/20200905/all_proofhints_invgen_full_ranks.csv};
    \addplot+[mark=none, dashed,y domain=0:1000]           table [col sep=comma,skip first n=2,y expr={\thisrowno{13}/1000},x index=12] {benchmarks_fmsd/20200905/all_proofhints_invgen_barrier_ranks.csv};
    \addplot+[mark=none, densely dotted,y domain=0:1000]    table [col sep=comma,skip first n=2,y expr={\thisrowno{13}/1000},x index=12] {benchmarks_fmsd/20200905/all_proofhints_invgen_dbx_ranks.csv};
    \addplot+[mark=none,y domain=0:1000]             table [col sep=comma,skip first n=2,y expr={\thisrowno{13}/1000},x index=12] {benchmarks_fmsd/20200905/all_proofhints_invgen_firstintegrals_ranks.csv};
    \addplot+[mark=none, densely dashdotted,y domain=0:1000]    table [col sep=comma,skip first n=2,y expr={\thisrowno{13}/1000},x index=12] {benchmarks_fmsd/20200905/all_proofhints_invgen_summands_ranks.csv};
\end{axis}
\end{tikzpicture}
\caption{\mbox{Total duration}}
\end{subfigure}}%
\quad
\begin{subfigure}[b]{.27\textwidth}
\begin{tikzpicture}
\begin{axis}[
        font=\scriptsize\sffamily,
scale only axis,
width=\columnwidth,
ylabel near ticks,
yticklabels={},
xtick={0,25,50,75,100},
ytick={0.1,1,10,100,1000,10000},
ymin=0.1,
ymax=5000,
grid=major,
ymode=log,
]
\addplot+[mark=none, color=lsgreen]                    table [col sep=comma,skip first n=2,y expr={\thisrowno{17}/1000},x index=12] {benchmarks_fmsd/20200905/all_proofhints_invgen_full_ranks.csv};
\addplot+[mark=none, dashed ]           table [col sep=comma,skip first n=2,y expr={\thisrowno{17}/1000},x index=12] {benchmarks_fmsd/20200905/all_proofhints_invgen_barrier_ranks.csv};
\addplot+[mark=none, densely dotted]    table [col sep=comma,skip first n=2,y expr={\thisrowno{17}/1000},x index=12] {benchmarks_fmsd/20200905/all_proofhints_invgen_dbx_ranks.csv};
\addplot+[mark=none, solid]             table [col sep=comma,skip first n=2,y expr={\thisrowno{17}/1000},x index=12] {benchmarks_fmsd/20200905/all_proofhints_invgen_firstintegrals_ranks.csv};
\addplot+[mark=none, densely dashdotted]table [col sep=comma,skip first n=2,y expr={\thisrowno{17}/1000},x index=12] {benchmarks_fmsd/20200905/all_proofhints_invgen_summands_ranks.csv};
\addplot+[mark=none, color=lsgreen, thin] 
table [col sep=comma,skip first n=2,y expr={\thisrowno{21}/1000},x index=12] {benchmarks_fmsd/20200905/all_proofhints_invgen_full_ranks.csv};
\node[circle,fill,inner sep=1pt] at (axis cs:100,121) {};
\node[label={270:{$\times 8$}},circle,fill,inner sep=1pt] at (axis cs:100,997) {};
\node[circle,fill,inner sep=1pt] at (axis cs:75,65) {};
\node[label={270:{$\times 6$}},circle,fill,inner sep=1pt] at (axis cs:75,373) {};
\node[label={270:{$\times 2$}},circle,fill,inner sep=1pt] at (axis cs:50,34) {};
\node[circle,fill,inner sep=1pt] at (axis cs:50,75) {};
\end{axis}
\end{tikzpicture}
\caption{Generation}
\end{subfigure}
\quad
\begin{subfigure}[b]{.27\textwidth}
\begin{tikzpicture}
\begin{axis}[
        font=\scriptsize\sffamily,
scale only axis,
width=\columnwidth,
ylabel near ticks,
yticklabels={},
xtick={0,25,50,75,100},
ytick={0.1,1,10,100,1000,10000},
ymin=0.1,
ymax=5000,
grid=major,
ymode=log,
]
\addplot+[mark=none, color=lsgreen]                    table [col sep=comma,skip first n=2,y expr={\thisrowno{21}/1000},x index=12] {benchmarks_fmsd/20200905/all_proofhints_invgen_full_ranks.csv};
\addplot+[mark=none, dashed ]           table [col sep=comma,skip first n=2,y expr={\thisrowno{21}/1000},x index=12] {benchmarks_fmsd/20200905/all_proofhints_invgen_barrier_ranks.csv};
\addplot+[mark=none, densely dotted]    table [col sep=comma,skip first n=2,y expr={\thisrowno{21}/1000},x index=12] {benchmarks_fmsd/20200905/all_proofhints_invgen_dbx_ranks.csv};
\addplot+[mark=none, solid]             table [col sep=comma,skip first n=2,y expr={\thisrowno{21}/1000},x index=12] {benchmarks_fmsd/20200905/all_proofhints_invgen_firstintegrals_ranks.csv};
\addplot+[mark=none, densely dashdotted]table [col sep=comma,skip first n=2,y expr={\thisrowno{21}/1000},x index=12] {benchmarks_fmsd/20200905/all_proofhints_invgen_summands_ranks.csv};
\addplot+[mark=none, color=lsgreen, thin] 
table [col sep=comma,skip first n=2,y expr={\thisrowno{17}/1000},x index=12] {benchmarks_fmsd/20200905/all_proofhints_invgen_full_ranks.csv};
\node[circle,fill,inner sep=1pt] at (axis cs:100,121) {};
\node[label={270:{$\times 8$}},circle,fill,inner sep=1pt] at (axis cs:100,997) {};
\node[circle,fill,inner sep=1pt] at (axis cs:75,65) {};
\node[label={270:{$\times 6$}},circle,fill,inner sep=1pt] at (axis cs:75,373) {};
\node[label={270:{$\times 2$}},circle,fill,inner sep=1pt] at (axis cs:50,34) {};
\node[circle,fill,inner sep=1pt] at (axis cs:50,75) {};
\end{axis}
\end{tikzpicture}
\caption{Checking }
\end{subfigure}}%
\caption{Cumulative logarithmic time (in seconds) taken to solve the fastest $n$ problems (more problems solved and flatter is better; accumulated generation and checking duration of Diff. Sat. compared at 50/75/100 fastest problems)}
\label{fig:accumulatedtimes}
\end{figure}
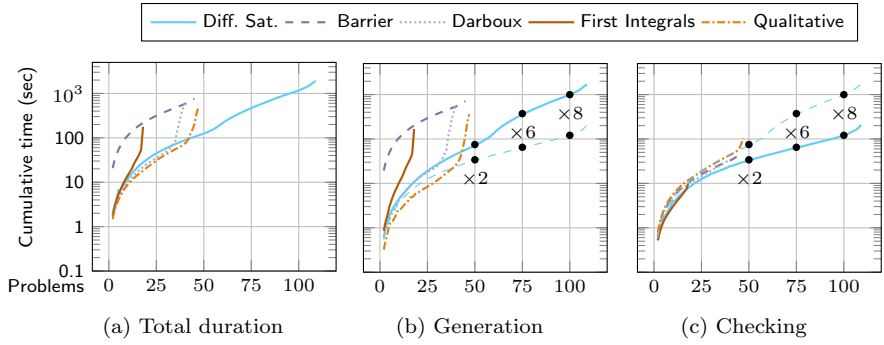

To evaluate the effectiveness of combining methods by differential saturation, \rref{fig:accumulatedtimes} plots the \emph{accumulated} duration for solving the fastest $n$ out of $150$ benchmark problems.
The main insights are:
\begin{enumerate*}[label=(\roman*)]
\item differential saturation solves the largest number of problems per accumulated time, i.e., despite sequentially executing invariant generation methods, it often succeeds in trying out the most efficient method first and fails fast when earlier methods are unsuitable; however, qualitative analysis (in isolation) generates some invariants faster when the heuristics it employs for guessing invariant candidates are successful,
\item cumulatively, invariant generation duration dominates invariant checking duration (note logarithmic scaling of the time axis in~\rref{fig:accumulatedtimes}); this effect is especially pronounced for barrier certificates, but can also be observed in all other methods when solving more expensive (harder) problems,
\item first integrals are least expensive to check when they solve problems,
\item qualitative analysis is less expensive for generation than other methods, but is most expensive for checking because the invariants it generates often have high descriptive complexity and may not have simple invariance justifications.

\end{enumerate*}

\subsubsection{Differential Saturation Configuration Options}

Next, we explore the effect of configuration options on the invariant generation and subsequent checking duration by disabling features of the differential saturation procedure.
Specifically, we executed differential saturation  with:
\begin{itemize}[label=DUMY,wide=0pt, leftmargin=*]
\item[\rref{itm:c1}\cancel{AR}] No Auto-Reduction, which is expected to speed up generation but may cause redundant cuts or unnecessarily complicated invariants.
\item[\rref{itm:c2}\cancel{HS}] No Heuristic Search, which is expected to produce more principled invariants and more specific proof hints but solve fewer problems.
\item[\rref{itm:c3}\cancel{BR}] No Budget Redistribution, which is expected to result in a more predictable generation duration but solve fewer problems.
\item[\rref{itm:c4}\cancel{SS}] No Subsystem Splitting, which is expected to result in faster performance on problems without clear subsystems, but solve fewer problems overall (e.g., the product problems should benefit from~\rref{itm:c4}).
\item[\cancel{PH}] No Proof Hints, which is expected to slow down invariant checking but have no effect on invariant generation.
\end{itemize}

Figure~\ref{fig:stratconfigcomp} shows the benefits and drawbacks of each configuration option on the suite of benchmark problems, while~\rref{fig:configaccumulatedtimes} summarizes the cumulative effect of configuration options. Since these configuration options are tuning parameters that offer fine-grained control over differential saturation for Pegasus, their cumulative effect over all 150 problems is small, see \rref{fig:configaccumulatedtimes}.

\begin{figure}[tb]
\begin{subfigure}[b]{\columnwidth}
\includegraphics[width=\columnwidth]{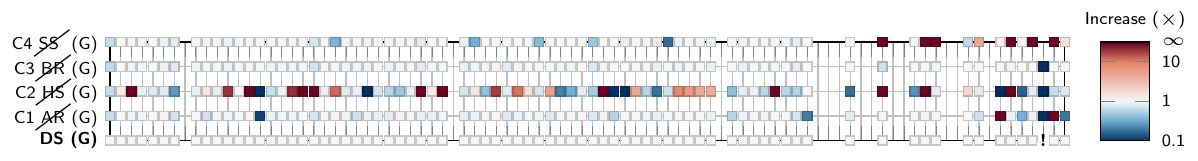}
\includegraphics[width=\columnwidth]{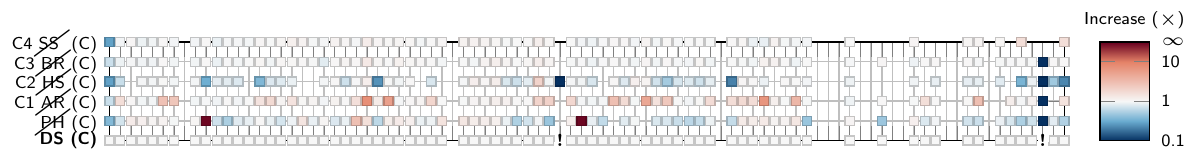}
\caption{Invariant generation (top) and checking (bottom) duration in multiples of differential saturation (90 benchmark problems from FM 2019 conference version \cite{DBLP:conf/fm/SogokonMTCP19})\vspace{\baselineskip}}
\label{fig:diffsatregreta}
\end{subfigure}
\begin{subfigure}[b]{\columnwidth}
\includegraphics[width=\columnwidth]{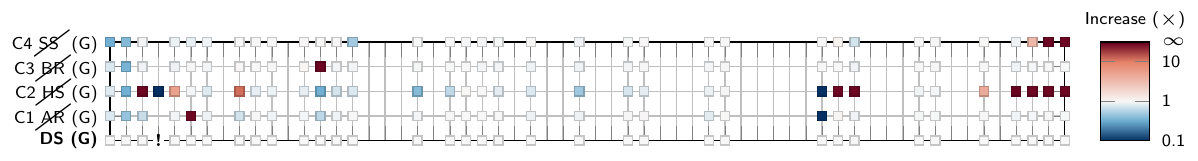}
\includegraphics[width=\columnwidth]{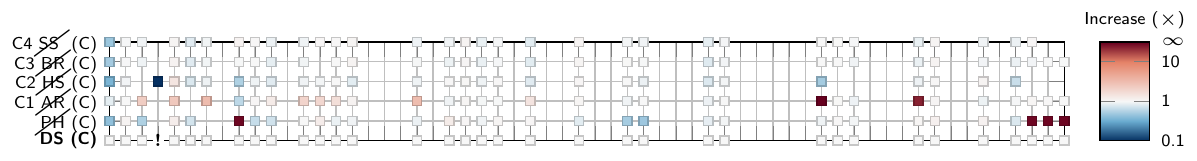}
\caption{Invariant generation (top) and checking (bottom) duration in multiples of differential saturation (60 additional benchmark problems\vspace{\baselineskip})}
\label{fig:diffsatregretb}
\end{subfigure}
\caption{Influence of configuration options: no Auto-Reduction (\rref{itm:c1}\cancel{AR}), no Heuristic Search (\rref{itm:c2}\cancel{HS}), no Budget Redistribution (\rref{itm:c3}\cancel{BR}), no Subsystem Splitting (\rref{itm:c4}\cancel{SS}), and no Proof Hints (\cancel{PH}). A \textbf{!} mark indicates that the default Differential Saturation (DS) configuration failed to generate or check that problem, while one (or more) of the other configuration options succeeded.} %
\label{fig:stratconfigcomp}
\end{figure}

Except for Heuristic Search (\rref{itm:c2}), disabling features results in similar (or slightly faster) generation duration for most problems, but at the expense of not solving others, see Figs.~\ref{fig:diffsatregreta} and~\ref{fig:diffsatregretb} (top).
On three particular problems, disabling features helped Pegasus to solve the problem within the given time budget.
Overall, the configuration options have little net effect on most problems but can make a difference on select problems:
\newcommand{\concl}{Conclusion:\xspace}%
\begin{itemize}
\item No Proof Hints (\cancel{PH}): Several problems check slightly faster \emph{without} following the proof hints, which indicates that \KeYmaeraX's checking procedure is sometimes able to find more efficient proofs than the hints.
However, there are also problems that check slightly slower and several problems that fail to check without proof hints.
\concl proof hints can be extremely helpful; they should be kept wherever possible, especially since they are inexpensive to produce in Pegasus. \KeYmaeraX could try its default checking procedure first and fallback to hints if the default fails.
\item No Auto-Reduction (\rref{itm:c1}\cancel{AR}): significant increase in proof checking duration on several examples, but decrease in generation duration on several examples as well.
\concl \rref{itm:c1} auto-reduction is useful for checking but at the expense of generation duration; it should be provided as an optional post-processing step for users interested in more succinct invariants.
\item No Heuristic Search (\rref{itm:c2}\cancel{HS}): variable severe impact (both positive and negative) on generation duration across examples, but fails to generate invariants for several examples.
However, checking duration is generally improved for principled invariants generated without heuristics.
Notably, two problems were successfully solved \emph{solely} by \rref{itm:c2}\cancel{HS} out of all other configuration options.
\concl \rref{itm:c2} should be a configurable option for users, but should typically be enabled when the ultimate goal is to solve a given problem and invariant generation time is not a significant constraint.
\item No Budget Redistribution (\rref{itm:c3}\cancel{BR}): minor impact on both generation and checking duration, except failing to solve one problem. \concl \rref{itm:c3} is not very impactful, but could be left enabled by default as a failsafe.
\item No Subsystem Splitting (\rref{itm:c4}\cancel{SS}): minor impact on both generation and checking duration for solved problems, but solves fewer problems (mostly product system and higher-dimensional problems).
\concl \rref{itm:c4} is a useful technique in invariant generation and should typically be enabled.
\end{itemize}

\begin{figure}[th]
{\setlength{\belowcaptionskip}{0pt}
{\captionsetup[subfigure]{oneside,margin={0.8cm,0cm}}
\begin{subfigure}[b]{.36\textwidth}
\begin{tikzpicture}
\begin{axis}[
        font=\scriptsize\sffamily,
scale only axis,
width=0.75\columnwidth,
    ylabel={\quad Cumulative time (sec)},
xlabel={Problems},
ylabel near ticks,
xlabel style={at={(0,0.12)},anchor=east},
xtick={0,25,50,75,100},
ytick={0.1,1,10,100,1000,10000},
yticklabels={0.1,1,10,100,$10^3$},
ymin=0.1,
ymax=5000,
grid=major,
ymode=log,
legend columns=6,
legend entries={Diff. Sat., \cancel{AR}, \cancel{HS}, \cancel{BR}, \cancel{SS}, \cancel{PH}},
legend style={at={(0.2,1.1)},anchor=south west}
]
    \addplot+[mark=none, thick, color=lsgreen,y domain=0:1000]                    table [col sep=comma,skip first n=2,y expr={\thisrowno{13}/1000},x index=12] {benchmarks_fmsd/20200905/all_proofhints_invgen_full_ranks.csv};
    \addplot+[mark=none, dashed,y domain=0:1000]           table [col sep=comma,skip first n=2,y expr={\thisrowno{13}/1000},x index=12] {benchmarks_fmsd/20200905/all_proofhints_invgen_full-nocut_ranks.csv};
    \addplot+[mark=none, densely dotted,y domain=0:1000]    table [col sep=comma,skip first n=2,y expr={\thisrowno{13}/1000},x index=12] {benchmarks_fmsd/20200905/all_proofhints_invgen_full-noheur_ranks.csv};
    \addplot+[mark=none,y domain=0:1000]             table [col sep=comma,skip first n=2,y expr={\thisrowno{13}/1000},x index=12] {benchmarks_fmsd/20200905/all_proofhints_invgen_full-stricttime_ranks.csv};
    \addplot+[mark=none, densely dashdotted,y domain=0:1000]    table [col sep=comma,skip first n=2,y expr={\thisrowno{13}/1000},x index=12] {benchmarks_fmsd/20200905/all_proofhints_invgen_full-nodep_ranks.csv};
        \addplot+[mark=none, color=lsgreen, loosely dashdotted,y domain=0:1000]    table [col sep=comma,skip first n=2,y expr={\thisrowno{13}/1000},x index=12] {benchmarks_fmsd/20200905/all_proofhints_invgen_full-nohints_ranks.csv};
\end{axis}
\end{tikzpicture}
\caption{\mbox{Total duration}}
\end{subfigure}}%
\quad
\begin{subfigure}[b]{.27\textwidth}
\begin{tikzpicture}
\begin{axis}[
        font=\scriptsize\sffamily,
scale only axis,
width=\columnwidth,
ylabel near ticks,
yticklabels={},
xtick={0,25,50,75,100},
ytick={0.1,1,10,100,1000,10000},
ymin=0.1,
ymax=5000,
grid=major,
ymode=log,
]
\addplot+[mark=none, color=lsgreen]                    table [col sep=comma,skip first n=2,y expr={\thisrowno{17}/1000},x index=12] {benchmarks_fmsd/20200905/all_proofhints_invgen_full_ranks.csv};
\addplot+[mark=none, dashed ]           table [col sep=comma,skip first n=2,y expr={\thisrowno{17}/1000},x index=12] {benchmarks_fmsd/20200905/all_proofhints_invgen_full-nocut_ranks.csv};
\addplot+[mark=none, densely dotted]    table [col sep=comma,skip first n=2,y expr={\thisrowno{17}/1000},x index=12] {benchmarks_fmsd/20200905/all_proofhints_invgen_full-noheur_ranks.csv};
\addplot+[mark=none, solid]             table [col sep=comma,skip first n=2,y expr={\thisrowno{17}/1000},x index=12] {benchmarks_fmsd/20200905/all_proofhints_invgen_full-stricttime_ranks.csv};
\addplot+[mark=none, densely dashdotted]table [col sep=comma,skip first n=2,y expr={\thisrowno{17}/1000},x index=12] {benchmarks_fmsd/20200905/all_proofhints_invgen_full-nodep_ranks.csv};
\addplot+[mark=none, color=lsgreen, loosely dashdotted]table [col sep=comma,skip first n=2,y expr={\thisrowno{17}/1000},x index=12] {benchmarks_fmsd/20200905/all_proofhints_invgen_full-nohints_ranks.csv};
\end{axis}
\end{tikzpicture}
\caption{Generation }
\end{subfigure}
\quad
\begin{subfigure}[b]{.27\textwidth}
\begin{tikzpicture}
\begin{axis}[
        font=\scriptsize\sffamily,
scale only axis,
width=\columnwidth,
ylabel near ticks,
yticklabels={},
xtick={0,25,50,75,100},
ytick={0.1,1,10,100,1000,10000},
ymin=0.1,
ymax=5000,
grid=major,
ymode=log,
]
\addplot+[mark=none, color=lsgreen]                    table [col sep=comma,skip first n=2,y expr={\thisrowno{21}/1000},x index=12] {benchmarks_fmsd/20200905/all_proofhints_invgen_full_ranks.csv};
\addplot+[mark=none, dashed ]           table [col sep=comma,skip first n=2,y expr={\thisrowno{21}/1000},x index=12] {benchmarks_fmsd/20200905/all_proofhints_invgen_full-nocut_ranks.csv};
\addplot+[mark=none, densely dotted]    table [col sep=comma,skip first n=2,y expr={\thisrowno{21}/1000},x index=12] {benchmarks_fmsd/20200905/all_proofhints_invgen_full-noheur_ranks.csv};
\addplot+[mark=none, solid]             table [col sep=comma,skip first n=2,y expr={\thisrowno{21}/1000},x index=12] {benchmarks_fmsd/20200905/all_proofhints_invgen_full-stricttime_ranks.csv};
\addplot+[mark=none, densely dashdotted]table [col sep=comma,skip first n=2,y expr={\thisrowno{21}/1000},x index=12] {benchmarks_fmsd/20200905/all_proofhints_invgen_full-nodep_ranks.csv};
\addplot+[mark=none, color=lsgreen, loosely dashdotted]table [col sep=comma,skip first n=2,y expr={\thisrowno{21}/1000},x index=12] {benchmarks_fmsd/20200905/all_proofhints_invgen_full-nohints_ranks.csv};
\end{axis}
\end{tikzpicture}
\caption{Checking}
\end{subfigure}}%
\caption{Configuration options: cumulative logarithmic time (in seconds) taken to solve the fastest $n$ problems (more problems solved and flatter is better)}
\label{fig:configaccumulatedtimes}
\end{figure}
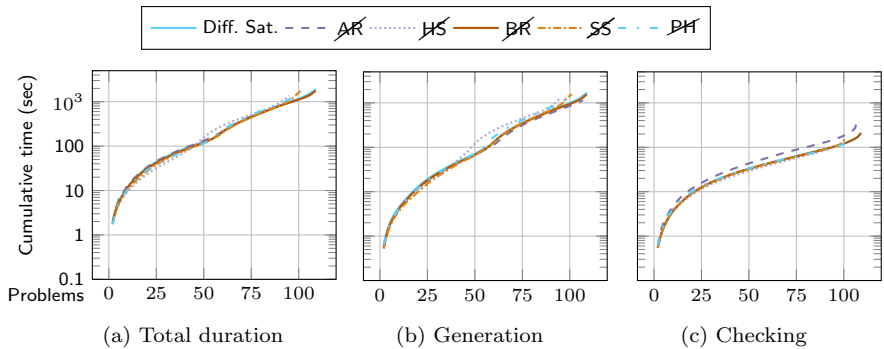

\section{Related Work}
\label{sec:related}
Techniques developed for \emph{qualitative simulation} have been applied to
prove temporal properties of continuous systems by Shults and
Kuipers~\cite{DBLP:journals/ai/ShultsK97}, as well as Loeser, Iwasaki and
Fikes~\cite{LoeserIwasakiFikes:1998}.
Zhao \cite{zhao1994extracting} developed a tool, MAPS, to automatically
identify significant features of dynamical systems, such as stability regions,
equilibria, and limit cycles. Since
our ultimate goal is sound invariant generation, we are less interested in a
full qualitative analysis of the state space.
In the verification community, discrete abstraction of hybrid systems was
studied by Alur \emph{et al}.~\cite{Alur2000}. The case of systems whose continuous
motion is governed by non-linear ODEs was studied in the work of Tiwari and
Khanna~\cite{Tiwari2008FMSD,TiwariKhanna2002HSCC}.
Tiwari studied reachability of linear systems~\cite{Tiwari2003},
using information from real eigenvectors and ideas from qualitative abstraction to generate invariants.
Zaki \emph{et al.}~\cite{DBLP:journals/jacic/ZakiDTB09} were the first to apply Darboux polynomials
to verification of continuous systems using discrete abstraction. Numerous works employ barrier
certificates for verification~\cite{DBLP:journals/jsc/DaiGXZ17,DBLP:conf/cav/KongHSHG13,DBLP:conf/hybrid/PrajnaJ04,DBLP:conf/fm/SogokonGTP18,DBLP:conf/fm/YangHCL016}.
Since we implement many of the above techniques as methods for invariant generation in Pegasus, our work draws heavily upon ideas developed previously in the
verification and hybrid systems communities.
Previous work~\cite{DBLP:conf/vmcai/SogokonGJP16} introduced a construction of exact
abstractions and applied rudimentary methods from qualitative
analysis to compute invariants; in certain ways, our present work also builds on this
experience, incorporating some of the techniques as special methods in a more general framework.
The coupling between \KeYmaeraX and Pegasus that we pursue is quite
distinct from the use of trusted oracles in the work of Wang \emph{et al.}~\cite{DBLP:conf/icfem/WangZZ15}
(for the HHL prover) and, notably, provides a \emph{sound} framework for reasoning with continuous
invariants that is significantly less exposed to soundness issues in external tools.

A \emph{complete} semi-algorithm for computing algebraic invariants (described by zero sets of polynomial
functions) for polynomial systems of ODEs was developed by Ghorbal and Platzer~\cite{DBLP:conf/tacas/GhorbalP14}.
An interesting development along very similar lines was also recently pursued
by Boreale~\cite{Boreale2020}, whose method makes use of the algebraic nature
of the precondition (initial set) in the verification problem in order to speed
up the algebraic invariant generation. Both of these (semi-)algorithms involve
enumeration of polynomial templates; the biggest practical difficulty stems
from the computational cost of minimizing the rank of symbolic matrices~\cite{DBLP:conf/tacas/GhorbalP14},
and computing the generators of \emph{real radical ideals}~\cite{Boreale2020}, both of which are difficult problems
with the latter having few algorithms with robust implementations currently in existence.\footnote{Although an \emph{incomplete} invariant generation procedure could still employ inexpensive ad-hoc methods to compute generators of real radical ideals; likewise, generators of (complex) radical ideals can be used instead in a sound but incomplete algebraic invariant generation algorithm~\cite[\S 5]{Boreale2020}.}
In the future, we hope to extend Pegasus with an implementation of these techniques.

\section{Outlook and Challenges}
\label{sec:outlook}
The improvements in continuous invariant generation have a significant impact
on the overall proof automation capabilities of \KeYmaeraX and serve to
increase overall system usability and improve user experience. Better proof
automation will certainly also be useful in future applications of provably
correct runtime monitoring frameworks, such as
ModelPlex~\cite{DBLP:journals/fmsd/MitschP16}, as well as frameworks for
generating verified controller executables, such as
VeriPhy~\cite{DBLP:conf/pldi/BohrerTMMP18}.
Some interesting directions for extending our work include implementation of
reachable set computation algorithms for all classes of problems where this is
possible. For instance, semi-algebraic reachable sets for
diagonalizable classes of linear systems with tame eigenvalues~\cite{gan2018reachability,DBLP:journals/jsc/LafferrierePY01}, as well as more generally~\cite{DBLP:conf/icalp/AlmagorKO020}.
The complexity of invariants obtained using these methods may not
always make them practical, but they would provide a valuable fallback when simpler invariants cannot be obtained using our currently
implemented methods.

A more pressing challenge lies in expanding the collection of safety verification
problems for continuous systems. While we have done our best to find compelling
examples from the literature, a larger corpus of problems would allow for a
more comprehensive empirical evaluation of invariant generation strategies and
could reveal interesting new insights that can suggest more effective strategies.

Correctness of decision procedures for real arithmetic is another important challenge.
For pragmatic reasons, \KeYmaeraX currently uses Mathematica's  implementation
of real quantifier elimination to check validity of first-order real arithmetic formulas.
Removing this reliance by efficiently building fully formal proofs of real
arithmetic formulas within \dL (e.g. through exhibiting appropriate
witnesses or using proof-producing procedures; see~\cite{DBLP:conf/cade/PlatzerQR09} 
for an overview) is an important task for the future.

Other important topics not addressed in this article concern \emph{stability} and
\emph{robustness} of continuous invariants~\cite{GoebelRobust,10.2307/j.ctvcm4hws,Khalil,DBLP:conf/tacas/KongGCC15}. 
These notions are important in ensuring that the generated invariants are
reflective of the real world, and are not merely by-products of mathematical
idealization.

\section{Conclusion}
\label{sec:conclusion}
Among verification practitioners, the amount of manual effort required for
formal verification of hybrid systems is one of the chief criticisms leveled
against the use of deductive verification tools.  Manually crafting continuous
invariants may require expertise and ingenuity, just like manually selecting
support function templates for reachability tools
\cite{DBLP:conf/cav/FrehseGDCRLRGDM11}, and presents a major practical hurdle
in the way of wider industrial adoption of this technology.
In this article, we describe our development of a system designed to help
overcome this hurdle by automating the discovery of continuous invariants.  To
our knowledge, this work represents the first large-scale effort in combining
continuous invariant generation methods into a single invariant generation
framework and making it possible to create more powerful invariant generation
strategies.  The approach we pursue is unique in its integration with a theorem
prover, which provides formal guarantees that the generated invariants are
indeed correct (in the form of \dL proofs, \emph{automatically}).  The results
we observe in our evaluation are highly encouraging and suggest that invariant
discovery can be improved considerably, opening many exciting avenues for
applications and extensions.

\begin{acknowledgements}
The authors would like to thank the anonymous reviewers for providing valuable
feedback and FM 2019 for the special issue invitation.
\end{acknowledgements}

\bibliographystyle{spmpsci}      %
\bibliography{root}   %

\end{document}